\theoremstyle{plain}
\theoremstyle{definition}
\newenvironment{construction}{
        \noindent {\bf Construction: }}
\keywords{Automata and Formal Languages, \buchi\ Complementation, Automata Theory, Nondeterministic \buchi\ Automata}
\subjclass{F.4 Mathematical Logic and Formal Languages}
\begin{document}
\sloppy

\newcommand\buchi{B\"uchi}
\newcommand\maxi{\mathit{max}}
\newcommand\odd{\mathit{odd}}
\newcommand\rank{\mathit{rank}}
\newcommand\te{\mathit{th}}
\newcommand\tight{\mathit{tight}}

\title{B\"uchi Complementation Made Tight}

\author{Sven Schewe}{Sven Schewe}
\address{University of Liverpool}
\email{Sven.Schewe@liverpool.ac.uk}
\urladdr{http://www.csc.liv.ac.uk/research/logics/}

\thanks{This work was partly supported by the EPSRC through the grand EP/F033567/1 \emph{Verifying Interoperability Requirements in Pervasive Systems}}

\def\thisbottomragged{\def\@textbottom{\vskip\z@ plus.0001fil
\global\let\@textbottom\relax}}

\frenchspacing
\widowpenalty=10000
\clubpenalty=10000

\bibliographystyle{alpha}
\maketitle

\begin{abstract}
The precise complexity of complementing \buchi\ automata is an intriguing and long standing problem.
While optimal complementation techniques for finite automata are simple -- it suffices to determinize them using a simple subset construction and to dualize the acceptance condition of the resulting automaton -- \buchi\ complementation is more involved.
Indeed, the construction of an EXPTIME complementation procedure took a quarter of a century from the introduction of \buchi\ automata in the early $60$s, and stepwise narrowing the gap between the upper and lower bound to a simple exponent (of $(6e)^n$ for \buchi\ automata with $n$ states) took four decades.
While the distance between the known upper ($O\big((0.96\,n)^n\big)$) and lower ($\Omega\big((0.76\,n)^n\big)$) bound on the required number of states has meanwhile been significantly reduced, an exponential factor remains between them.
Also, the upper bound on the size of the complement automaton is not linear in the bound of its state space.
These gaps are unsatisfactory from a theoretical point of view, but also because \buchi\ complementation is a useful tool in formal verification, in particular for the language containment problem.
This paper proposes a \buchi\ complementation algorithm whose complexity meets, modulo a quadratic ($O(n^2)$) factor, the known lower bound for \buchi\ complementation.
It thus improves over previous constructions by an exponential factor and concludes the quest for optimal \buchi\ complementation algorithms.
\end{abstract}

\section{Introduction}

The precise complexity of \buchi\ complementation is an intriguing problem for two reasons:
First, the quest for optimal algorithms is a much researched problem (c.f.,~\cite{Buchi/62/Automata,Sakoda+Sipser/78/finTight,Pecuchet/86/Complement,Sistla+Vardi+Wolper/87/Complement,Safra/88/Safra,Michel/88/lowerComplementation,Thomas/99/complementation,Loding/99/omega,Kupferman+Vardi/01/Weak,GKSV/03/Complementing,FKV/06/tighter,Piterman/07/Parity,Vardi/07/Saga,Yan/08/lowerComplexity}) that has defied numerous approaches to solving it.
And second, \buchi\ complementation is a valuable tool in formal verification (c.f.,~\cite{kurshan/94/verification}), in particular when studying language inclusion problems of $\omega$-regular languages.
In addition to this, complementation is useful to check the correctness of other translation techniques~\cite{Vardi/07/Saga,Tsay/08/goal}.
The GOAL tool~\cite{Tsay/08/goal}, for example, provides such a test suite and incorporates four of the more recent algorithms~\cite{Safra/88/Safra,Thomas/99/complementation,Kupferman+Vardi/01/Weak,Piterman/07/Parity} for \buchi\ complementation.

While devising optimal complementation algorithms for nondeterministic finite automata is simple---%
nondeterministic \emph{finite} automata can be determinized using a simple subset construction, and deterministic finite automata can be complemented by complementing the set of final states~\cite{Rabin+Scott/59/finite,Sakoda+Sipser/78/finTight}---%
devising optimal complementation algorithms for nondeterministic \buchi\ automata is hard, because simple subset constructions are not sufficient to determinize or complement them~\cite{Michel/88/lowerComplementation,Loding/99/omega}.

Given the hardness and importance of the problem, \buchi\ complementation enjoyed much attention~\cite{Buchi/62/Automata,Pecuchet/86/Complement,Sistla+Vardi+Wolper/87/Complement,Michel/88/lowerComplementation,Safra/88/Safra,Loding/99/omega,Thomas/99/complementation,Kupferman+Vardi/01/Weak,GKSV/03/Complementing,FKV/06/tighter,Vardi/07/Saga,Tsay/08/goal,Yan/08/lowerComplexity}, resulting in a continuous improvement of the upper and lower bounds.

The first complementation algorithm dates back to the introduction of \buchi\ automata in 1962.
In his seminal paper ``On a decision method in restricted second order arithmetic'' \cite{Buchi/62/Automata}, \buchi\ develops a complementation procedure that comprises a doubly exponential blow-up.
While \buchi's result shows that nondeterministic \buchi\ automata (and thus \mbox{$\omega$-regular} expressions) are closed under complementation, complementing an automaton with $n$ states may, when using \buchi's complementation procedure, result in an automaton with $2^{2^{O(n)}}$ states, while an $\Omega(2^n)$ lower bound~\cite{Sakoda+Sipser/78/finTight} is inherited from finite automata.

In the late 80s, these bounds have been improved in a first sequence of results, starting with establishing an EXPTIME upper bound~\cite{Pecuchet/86/Complement,Sistla+Vardi+Wolper/87/Complement}, which matches the EXPTIME lower bound~\cite{Sakoda+Sipser/78/finTight} inherited from finite automata.
However, the early EXPTIME complementation techniques produce automata with up to $2^{O(n^2)}$ states~\cite{Pecuchet/86/Complement,Sistla+Vardi+Wolper/87/Complement};
hence, these upper bounds are still exponential in the lower bounds.

This situation changed in 1988, when Safra introduced his famous determinization procedure for nondeterministic \buchi\ automata~\cite{Safra/88/Safra}, resulting in an $n^{O(n)}$ bound for \buchi\ complementation, while Michel~\cite{Michel/88/lowerComplementation} established a seemingly matching $\Omega(n!)$ lower bound in the same year.
Together, these results imply that \buchi\ complementation is in $n^{\theta(n)}$, leaving again the impression of a tight bound.

As pointed our by Vardi~\cite{Vardi/07/Saga}, this impression is misleading, because the $O()$ notation hides an $n^{\theta(n)}$ gap between both bounds.
This gap has been narrowed down in 2001 to $2^{\theta(n)}$ by the introduction of an alternative complementation technique that builds on level rankings and a cut-point construction~\cite{Kupferman+Vardi/01/Weak}. (Level rankings are functions from the states $Q$ of a nondeterministc \buchi\ automaton to $\{0,1,\ldots,2|Q|+1\}$.)
The complexity of the plain method is approximately $(6n)^n$~\cite{Kupferman+Vardi/01/Weak}, leaving a $(6e)^n$ gap to Michel's lower bound~\cite{Michel/88/lowerComplementation}.

Recently, tight level rankings~\cite{FKV/06/tighter,Yan/08/lowerComplexity}---a special class of level rankings that is onto a predefined subset---have been exploited
by Friedgut, Kupferman, and Vardi~\cite{FKV/06/tighter} to improved the upper complexity bound to $O\big((0.96 n)^n\big)$, and
by Yan~\cite{Yan/08/lowerComplexity} to improve the lower complexity bound to $\Omega\big((0.76 n)^n\big)$.

In the remainder of this paper, we first recapitulate the basic complementation technique of Kupferman and Vardi~\cite{Kupferman+Vardi/01/Weak},
and discuss the core ideas of the improved complexity analysis of Friedgut, Kupferman, and Vardi~\cite{FKV/06/tighter} and Yan~\cite{Yan/08/lowerComplexity}.
We then show how to improve the complementation technique of Friedgut, Kupferman, and Vardi~\cite{FKV/06/tighter} such that the resulting complementation algorithm meets the known lower bound~\cite{Yan/08/lowerComplexity} modulo a small polynomial factor (quadratic in the size of the automaton that is to be complemented),
and show that, different to older constructions~\cite{Kupferman+Vardi/01/Weak,GKSV/03/Complementing}, we can achieve an equivalent bound on the number of edges.

\section{Preliminaries}
\subsection{\buchi\ Automata}

Nondeterministic \buchi\ automata~\cite{Buchi/62/Automata} are used to represent $\omega$-regular languages $L\subseteq \Sigma^\omega = \omega \rightarrow \Sigma$ over a finite alphabet $\Sigma$.
A nondeterministic \buchi\ automaton $\mathcal A= (\Sigma,Q,I,\delta,F)$ is a five tuple, consisting of a finite alphabet $\Sigma$,
a finite set $Q$ of states with a non-empty subset $I\subseteq Q$ of initial states,
a transition function $\delta: Q\times \Sigma \rightarrow 2^Q$ that maps states and input letters to sets of successor states, and
a set $F\subseteq Q$ of final states.

Nondeterministic \buchi\ automata are interpreted over infinite sequences $\alpha: \omega \rightarrow \Sigma$ of input letters.
An infinite sequence $\rho: \omega \rightarrow Q$ of states of $\mathcal A$ is called a \emph{run} of $\mathcal A$ on an input word $\alpha$ if the first letter $\rho(0) \in I$ of $\rho$ is an initial state, and if, for all $i\in \omega$, $\rho(i+1)\in \delta\big(\rho(i),\alpha(i)\big)$ is a successor state of $\rho(i)$ for the input letter $\alpha(i)$.

A run $\rho: \omega \rightarrow Q$ is called \emph{accepting} if some finite state appears infinitely often in~$\rho$ ($\mathit{inf}(\rho)\cap F\neq \emptyset$ for $\mathit{inf}(\rho)=\{q\in Q \mid \forall i \in \omega \; \exists j>i\mbox{ such that } \rho(j)=q\}$).
A word $\alpha: \omega \rightarrow \Sigma$ is \emph{accepted} by $\mathcal A$ if $\mathcal A$ has an accepting run on $\alpha$, and
the set $\mathcal{L(A)}=\{\alpha \in \Sigma^\omega \mid \alpha \mbox{ is accepted by }\mathcal A\}$ of words accepted by $\mathcal A$ is called its \emph{language}.

For technical convenience we also allow for finite runs $q_0q_1q_2\ldots q_n$ with $\delta\big(q_n,\alpha(n)\big)=\emptyset$.
Naturally, no finite run satisfies the \buchi\ condition;
all finite runs are therefore rejecting, and have no influence on the language of an automaton.

The two natural complexity measures for a \buchi\ automaton are the size $|Q|$ of its state space, and its size $\sum\limits_{q\in Q,\ \sigma \in \Sigma}1+|\delta(q,\sigma)|$, measured in the size of its transition function.

\subsection{Run DAG and Acceptance}
\label{ssec:dag}

In~\cite{Kupferman+Vardi/01/Weak}, Kupferman and Vardi introduce a \buchi\ complementation algorithm that uses level rankings as witnesses for the absence of an accepting run.

The set of all runs of a nondeterministic \buchi\ automaton $\mathcal A=(\Sigma,Q,I,\delta,F)$ on a word $\alpha: \omega \rightarrow \Sigma$ can be represented by a directed acyclic graph (DAG) $\mathcal G_\alpha=(V,E)$ with
\begin{itemize}
\item vertices $V \subseteq Q\times \omega$ such that $(q,p)\in V$ is in the set $V$ of vertices if and only if there is a run $\rho$ of $\mathcal A$ on $\alpha$ with $\rho(p)=q$, and

\item edges $E \subseteq (Q\times \omega) \times (Q\times \omega)$ such that $\big((q,p),(q',p')\big)\in E$ if and only if $p'=p+1$ and $q' \in \delta\big(q,\alpha(p)\big)$ is a successor of $q$ for the input letter $\alpha(p)$.
\end{itemize}

We call $\mathcal G_\alpha=(V,E)$ the \emph{run DAG} of $\mathcal A$ for $\alpha$, and the vertices $V \cap (Q \times \{i\})$ of $\mathcal G_\alpha=(V,E)$ that refer to the $i^\te$ position of runs the \emph{$i^\te$ level} of $\mathcal G_\alpha=(V,E)$.

The run DAG $\mathcal G_\alpha$ is called \emph{rejecting} if no path in $\mathcal G_\alpha$ satisfies the \buchi\ condition.
That is, $\mathcal G_\alpha$ is rejecting if and only if $\mathcal A$ rejects $\alpha$.
$\mathcal A$ can therefore be complemented to a nondeterministc \buchi\ automaton $\mathcal B$ that checks if $\mathcal G_\alpha$ is rejecting.

The property that $\mathcal G_\alpha$ is rejecting can be expressed in terms of ranks.
We call a vertex $(q,p)\in V$ of a DAG $\mathcal G=(V,E)$ \emph{finite}, if the set of vertices reachable from $(q,p)$ in $\mathcal G$ is finite, and \emph{endangered}, if no vertex reachable from $(q,p)$ is accepting (that is, in $F\times \omega$).

Based on these definitions, \emph{ranks} can be assigned to the vertices of a rejecting run DAG.
We set ${\mathcal G_\alpha}^0= \mathcal G_\alpha$, and repeat the following procedure until a fixed point is reached, starting with $i=0$:

\begin{itemize}
\item Assign all finite vertices of ${\mathcal G_\alpha}^i$ the rank $i$, and set ${\mathcal G_\alpha}^{i+1}$ to ${\mathcal G_\alpha}^i$ minus the states with rank $i$ (that is, minus the states finite in ${\mathcal G_\alpha}^i$).

\item Assign all endangered vertices of ${\mathcal G_\alpha}^{i+1}$ the rank $i+1$, and set ${\mathcal G_\alpha}^{i+2}$ to ${\mathcal G_\alpha}^{i+1}$ minus the states with rank $i+1$ (that is, minus the states endangered in ${\mathcal G_\alpha}^{i+1}$).

\item Increase $i$ by $2$.
\end{itemize}

A fixed point is reached in $n+1$ steps, and the ranks can be used to characterize the complement language of a nondeterministic \buchi\ automaton:

\begin{proposition}
\cite{Kupferman+Vardi/01/Weak}
\label{prop:dag}
A nondeterministic \buchi\ automaton $\mathcal A$ with $n$ states rejects a word $\alpha: \omega \rightarrow \Sigma$ if and only if ${\mathcal G_\alpha}^{2n+1}$ is empty.
\qed
\end{proposition}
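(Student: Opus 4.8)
The plan is to establish the two implications of the equivalence separately. Throughout I use the two structural facts about a run DAG that make the ranks behave well: every level contains at most $|Q| = n$ vertices, and the DAG is finitely branching, so by K\"onig's lemma a vertex is \emph{finite} precisely when no infinite path issues from it.

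For the direction ``if ${\mathcal G_\alpha}^{2n+1}$ is empty then $\mathcal A$ rejects $\alpha$'' I argue the contrapositive. If $\mathcal A$ accepts $\alpha$, fix an accepting run; it is an infinite path $\pi$ in $\mathcal G_\alpha$ meeting $F \times \omega$ infinitely often. I claim $\pi$ survives every round of the rank construction, by induction on the round index $j$: if $\pi \subseteq {\mathcal G_\alpha}^j$ and $j$ is even, no vertex of $\pi$ is finite in ${\mathcal G_\alpha}^j$ (the tail of $\pi$ witnesses an infinite path from it), so none is deleted; if $j$ is odd, no vertex of $\pi$ is endangered in ${\mathcal G_\alpha}^j$ (the tail of $\pi$ still contains accepting vertices, all of them in ${\mathcal G_\alpha}^j$), so again none is deleted. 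Hence $\pi \subseteq {\mathcal G_\alpha}^{2n+1}$, and the latter is nonempty.

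The substantial direction is ``if $\mathcal A$ rejects $\alpha$ then ${\mathcal G_\alpha}^{2n+1}$ is empty''. Assume $\mathcal G_\alpha$ is rejecting and prove, by induction on $i$, that whenever ${\mathcal G_\alpha}^{2i}$ is nonempty there is a level $l_i$ beyond which all of its levels have at most $n-i$ vertices; the base case $i=0$ is the width bound. For the step, first note that ${\mathcal G_\alpha}^{2i+1}$ -- obtained by deleting the finite vertices of ${\mathcal G_\alpha}^{2i}$ -- has \emph{no} finite vertices, since an infinite path consists entirely of non-finite vertices and hence survives; thus every vertex of ${\mathcal G_\alpha}^{2i+1}$ lies on an infinite path within it. If ${\mathcal G_\alpha}^{2i+1}$ is infinite it must moreover contain an endangered vertex: otherwise, starting at any vertex and alternately following a path to an accepting vertex (using non-endangerment) and then a one-step successor (using non-finiteness) would assemble an infinite path through infinitely many accepting vertices, contradicting that $\mathcal G_\alpha$ is rejecting. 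Now take an endangered vertex $v$ on a level $\geq l_i$ together with an infinite path $v = u_0, u_1, u_2, \ldots$ from it; since every $u_k$ is reachable from $v$ it is endangered as well, so all the $u_k$ vanish in ${\mathcal G_\alpha}^{2i+2}$, and consequently every level of ${\mathcal G_\alpha}^{2i+2}$ beyond that of $v$ loses at least one vertex, giving the bound $n-i-1$ (if ${\mathcal G_\alpha}^{2i+1}$ is finite or empty the bound is trivial for large levels). Instantiating the claim at $i = n$ makes all sufficiently late levels of ${\mathcal G_\alpha}^{2n}$ empty, so ${\mathcal G_\alpha}^{2n}$ is a finite DAG; then every one of its vertices is finite, all receive rank $2n$, and ${\mathcal G_\alpha}^{2n+1}$ is empty.

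I expect the crux to be the inductive claim, and within it the propagation of endangerment forward along an infinite path: this is exactly what turns ``some vertex is removed'' into ``a vertex is removed on every sufficiently late level'', which is what forces the per-level width to drop by one at each double round. The existence of an endangered vertex in a rejecting DAG without finite vertices -- obtained by explicitly building an accepting path -- is the second point requiring some care, in particular in the degenerate case where a vertex reaches an accepting vertex only through itself.
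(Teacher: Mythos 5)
Your proof is correct and follows essentially the same route as the paper's own justification (which the paper only sketches, deferring to Kupferman--Vardi): an accepting run survives every deletion round, odd rounds remove no finite vertices, endangerment propagates along an infinite path so each endangered-removal round shrinks the width of almost every level by one, and the width bound $n$ then forces ${\mathcal G_\alpha}^{2n}$ to be finite and ${\mathcal G_\alpha}^{2n+1}$ empty. Your explicit construction of an accepting path in an infinite, finite-vertex-free, endangered-vertex-free DAG is exactly the paper's ``constructing an accepting run from ${\mathcal G_\alpha}^\infty$ is simple'' step, carried out in detail.
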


To see that a fixed point is reached after $n+1$ iterations,
note that deleting all finite or endangered vertices leaves a DAG without finite or endangered vertices, respectively.
Hence a fixed point is reached as soon as we do not assign a rank $i$ to any vertex.
By construction, no DAG ${\mathcal G_\alpha}^{2i+1}$ contains finite vertices.
If it contains an endangered vertex~$v$, then all vertices reachable from $v$ are endangered, too.
This implies that some vertex of almost all levels is assigned the rank $2i+1$.
Hence, if no fixed point is reached earlier, some rank is assigned to all vertices of almost all levels after $n$ iterations (there cannot be more than $n$ vertices in a level), and all vertices in ${\mathcal G_\alpha}^{2n}$ must be finite.

If the reached fixed point ${\mathcal G_\alpha}^\infty$ is non-empty, then it contains only infinite and non-endangered vertices, and
constructing an accepting run from ${\mathcal G_\alpha}^\infty$ is simple.
Vice versa, an accepting run $\rho$ on an $\omega$-word $\alpha$ can be viewed as an infinite sub-graph of $\mathcal G_\alpha$ that does not contain finite or endangered nodes. By a simple inductive argument, the sub-graph identified by $\rho$ is therefore a sub-graph of ${\mathcal G_\alpha}^i$ for all $i\in \omega$, and hence of ${\mathcal G_\alpha}^\infty$.

\subsection{\buchi\ Complementation}

The connection between \buchi\ complementation, run DAGs and ranks leads to an elegant complementation technique.
We call the maximal rank of a vertex in a level the rank of this level, the rank of almost all vertices $(\rho(i),i)$, $i \in \omega$ of a run $\rho$ (or: path in $\mathcal G$) the rank of $\rho$, and the rank of almost all levels of a DAG $\mathcal G$ the rank of $\mathcal G$. (Note that level ranks can only go down, and that vertex ranks can only go down along a path.)

For a given nondeterministic \buchi\ automaton $\mathcal A=(\Sigma,Q,I,\delta,F)$ with $n$ states, we call a function $f:Q \rightarrow \{0,1,\ldots,2n\}$ that maps all accepting states to odd numbers ($f(F) \cap 2\omega = \emptyset$) a \emph{level ranking}.

\begin{proposition}
\cite{Kupferman+Vardi/01/Weak}
\label{prop:basic}
For a given nondeterministic \buchi\ automaton $\mathcal A=(\Sigma,Q,I,\delta,F)$, the nondeterministic \buchi\ automaton $\mathcal B=(\Sigma,Q',I',\delta',F')$ with
\begin{itemize}
\item $Q' = 2^Q \times 2^Q \times \mathcal R$,

\item $I' = \{I\} \times \{\emptyset\} \times \mathcal R$,

\item $\delta'\big((S,O,f),\sigma)=\big\{\big(\delta(S,\sigma),\delta(O,\sigma)\smallsetminus \odd(f'),f'\big) \mid f' \leq^S_\sigma f,\ O\neq \emptyset \big\}$ \newline
\hspace*{24.7mm} $\cup \, \big\{\big(\delta(S,\sigma),\delta(S,\sigma)\smallsetminus \odd(f'),f'\big) \mid f' \leq^S_\sigma f,\ O= \emptyset \big\}$, and

\item $F' = 2^Q  \times \{\emptyset\} \times \mathcal R$,
\end{itemize}
where
\begin{itemize}
\item $\mathcal R$ is the set of all level rankings of $\mathcal A$,

\item $\odd(f) = \{q\in Q \mid f(q) \mbox{ is odd}\}$, and

\item $f'\leq^S_\sigma f \, :\Leftrightarrow \, \forall q \in S,\ q' \in \delta(q,\sigma).\ f'(q')\leq f(q)$,
\end{itemize}
accepts the complement $\mathcal{L(B)} = \overline{\mathcal{L(A)}} = \Sigma^\omega \smallsetminus \mathcal{L(A)}$ of the language of $\mathcal A$.
\qed
\end{proposition}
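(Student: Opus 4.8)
The plan is to prove $\mathcal{L(B)}=\overline{\mathcal{L(A)}}$ by showing, for every $\alpha$, that $\mathcal A$ rejects $\alpha$ if and only if $\mathcal B$ accepts $\alpha$, using Proposition~\ref{prop:dag} as the link between the two automata. The first observation I would record is that the $S$-component of \emph{every} run of $\mathcal B$ is forced: $S_0=I$, and the first component of each $\delta'$-successor is $\delta(S,\sigma)$, so a routine induction (using that partial runs are allowed, hence every reachable vertex really sits on a run) shows that $S_i=\delta(S_{i-1},\alpha(i-1))$ is exactly the $i^\te$ level of $\mathcal G_\alpha$. A run of $\mathcal B$ on $\alpha$ therefore amounts to a guessed sequence $f_0,f_1,\dots$ of level rankings together with the $O$-components that they deterministically induce, and the side condition $f_{i+1}\le^{S_i}_{\alpha(i)}f_i$ says precisely that the $f_i$ label the vertices of $\mathcal G_\alpha$ with values that never increase along an edge. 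The $O$-component together with the acceptance set $F'=2^Q\times\{\emptyset\}\times\mathcal R$ is the device that turns such a labelling into a genuine certificate of rejection.

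For the inclusion $\mathcal{L(B)}\subseteq\overline{\mathcal{L(A)}}$ I would take an accepting run $(S_i,O_i,f_i)_{i\in\omega}$ and argue by contradiction: suppose $\mathcal A$ accepts $\alpha$, so $\mathcal G_\alpha$ contains a path $\rho$ meeting $F$ infinitely often. Along $\rho$ the values $f_i(\rho(i))$ are non-increasing and hence eventually constant, say equal to $m$ from some level on; since $\rho$ meets $F$ infinitely often and a level ranking fixes the parity of $F$-states, $m$ has that parity. On the other hand the $O$-component is a Miyano--Hayashi style breakpoint device: after each \emph{breakpoint} (a level where $O=\emptyset$) it keeps track of the states lying on run segments that have not yet hit a rank of the complementary parity, and it is reset to (the relevant part of) $S$ at the next breakpoint. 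Hence ``$O=\emptyset$ infinitely often'' is equivalent, via a König's-lemma argument on the finitely-branching subgraph of the vertices kept alive in $O$, to the statement that $\mathcal G_\alpha$ has no infinite path whose rank stays in the $F$-parity from some level onwards; equivalently, every infinite path of $\mathcal G_\alpha$ eventually settles at a rank of the complementary parity. This contradicts the conclusion about $\rho$, so $\mathcal A$ must reject $\alpha$.

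For the converse $\overline{\mathcal{L(A)}}\subseteq\mathcal{L(B)}$, assume $\mathcal A$ rejects $\alpha$. By Proposition~\ref{prop:dag}, ${\mathcal G_\alpha}^{2n+1}=\emptyset$, so the rank procedure terminates and equips every vertex of $\mathcal G_\alpha$ with a rank in $\{0,\dots,2n\}$ that does not increase along edges; moreover a rank of the ``finite'' parity is only ever assigned to a finite vertex, and $\mathcal G_\alpha$ is acyclic, so no infinite path can stay in that parity, i.e.\ every infinite path settles at a rank of the ``endangered'' parity. I would read the ranks on level $i$ as a level ranking $f_i$ — the one point needing a line of argument is that accepting vertices, never being endangered, come out on the ``finite'' side and so must be adjusted to the neighbouring value, which is harmless for the non-increasing condition because an accepting vertex and the successors sharing its rank lie in the same stratum of the procedure. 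Setting $S_0=I$, $O_0=\emptyset$ and letting the $f_i$ drive the run makes all transition conditions of $\mathcal B$ hold, and the König argument of the previous paragraph, run in the other direction, shows that $O$ is emptied infinitely often; hence this run is accepting.

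The step I expect to be the actual work is the bookkeeping around the obligation component: formulating and proving, in both directions, the equivalence of ``the run of $\mathcal B$ visits $F'$ infinitely often'' and ``every infinite path of $\mathcal G_\alpha$ eventually avoids the $F$-parity among its ranks''. The natural tool on the DAG side is König's lemma (each level of $\mathcal G_\alpha$ has width at most $n$), applied to the subgraph of vertices kept alive in $O$; the complementary observation is that $O$ can only shrink between two consecutive breakpoints. A lesser nuisance, already flagged above, is reconciling the parity of accepting states in the two pictures: an accepting vertex of the run DAG is never endangered and therefore carries the ``finite'' parity of the canonical rank, which has to be squared with the parity convention fixed by the definition of a level ranking.
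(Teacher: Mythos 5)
The paper itself gives no proof of this proposition (it is imported from Kupferman--Vardi with a \qed), so the only benchmark is the standard argument from that paper, and your skeleton -- forced $S$-component equal to the levels of $\mathcal G_\alpha$, eventually-constant rank along an accepting path for $\mathcal{L(B)}\subseteq\overline{\mathcal{L(A)}}$, canonical ranks from Proposition~\ref{prop:dag} plus a K\"onig/breakpoint argument for the converse -- is exactly that argument. The genuine problem is the point you yourself dismiss as a ``lesser nuisance'': the parity of accepting states. The transition relation prunes $O$ by $\smallsetminus\,\odd(f')$, so $O$ tracks \emph{even} values; your soundness step (``$O$ tracks states that have not yet hit a rank of the complementary parity, so a path meeting $F$ infinitely often stabilises at the $F$-parity and blocks acceptance'') is therefore only valid if level rankings give accepting states \emph{even} values. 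Under the definition as literally printed in the preliminaries ($f(F)\cap 2\omega=\emptyset$, i.e.\ odd values on $F$) the proposition is simply false: for the one-state automaton with $Q=F=\{q\}$ and a self-loop, the constant ranking $f(q)=1$ keeps $O=\emptyset$ forever, so $\mathcal B$ accepts $a^\omega\in\mathcal{L(A)}$. That definition is a slip -- the paper itself uses the even convention in the proof of Lemma~\ref{lem:incl1} (``taking into account that $f_l$ is a level ranking \ldots\ is even''), matching Kupferman--Vardi -- and recognising this is the only correct resolution; no local repair of the rankings can rescue the literal convention, and your soundness paragraph silently assumes the even convention anyway.

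With the even convention your converse direction needs no adjustment at all: accepting vertices are never endangered, hence their canonical ranks are even, which is precisely what a level ranking demands, and the canonical rankings (extended with even values on $F$ outside $S_i$) already satisfy $f_{i+1}\leq^{S_i}_{\alpha(i)}f_i$. By contrast, the shift you propose (``adjust accepting vertices to the neighbouring value, harmless because of same stratum'') does not preserve the non-increasing condition: a non-accepting vertex of rank $2i$ with an accepting successor of the same rank $2i$ (both finite in the same stratum) violates $f'(q')\leq f(q)$ once the successor is raised to $2i+1$, while lowering it to $2i-1$ violates the condition on an edge to a non-accepting successor of rank $2i$ and is impossible at rank $0$. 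So the step you flagged is where your write-up actually breaks; fix the parity convention, drop the adjustment, and the remainder of your outline (the forced $S$-component, the breakpoint bookkeeping, and the K\"onig argument, which is in fact only needed in the completeness direction) goes through as the standard proof.
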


The first element $S$ of a triple $(S,O,f) \in Q'$ reflects the set of states of $\mathcal A$ reachable upon the input seen so far (the states in the respective level of the run DAG), and the third element $f$ is a mapping that intuitively maps reachable states to their rank.

The condition $f'\leq^S_\sigma f$ ensures that the rank of the vertices (or rather: the value assigned to them) is decreasing along every path of the run DAG.
The second element is used for a standard cut-point construction, comparable to the cut-point constructions in the determinization of Co-\buchi\ automata or the nondeterminization of alternating \buchi\ automata.
It contains the positions whose rank (or rather: the value assigned to them) has been even ever since the last cut-point ($O=\emptyset$) was reached;
it intuitively ensures that the respective vertices are finite.

\subsection{Tight Level Rankings}
Friedgut, Kupferman, and Vardi~\cite{FKV/06/tighter} improved this complementation technique by exploiting the observation that the true ranks of the run DAG $\mathcal G_\alpha$ of a rejected $\omega$-word~$\alpha$ are eventually always tight.
A level ranking $f:Q \rightarrow \omega$ is called \emph{tight}, if it has an odd rank~$r$, and is onto the odd numbers $\{1,3,\ldots,r\}$ up to its rank $r$, and \emph{$S$-tight}, if its restriction to $S$ is tight and if it maps all states not in $S$ to $1$ ($f(q)=1 \;\, \forall q \in Q\smallsetminus S$).

\begin{proposition}
\cite{FKV/06/tighter}
\label{prop:tight}
For every run DAG $\mathcal G_\alpha$ with finite rank $r$, it holds that
\begin{itemize}
\item $r$ is odd, and
\item there is a level $l\geq 0$ such that, for all levels $l' \geq l$ and all odd ranks $o \leq r$, there is a node $(q,l') \in \mathcal G_\alpha$ with rank $o$ in $\mathcal G_\alpha$.
\qed
\end{itemize}
\end{proposition}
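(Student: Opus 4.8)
The plan is to run everything off two monotonicity properties of ranks already noted above --- level ranks do not increase from one level to the next, and vertex ranks do not increase along a path --- together with the recorded facts that no ${\mathcal G_\alpha}^{2k+1}$ contains a finite vertex and that each ${\mathcal G_\alpha}^{j}$ is the sub-DAG of $\mathcal G_\alpha$ induced by the vertices of rank $\ge j$. Since $r$ is the rank of $\mathcal G_\alpha$, fix a level $l$ such that every level $l'\ge l$ has maximal vertex rank exactly $r$; I will also use that every vertex of $\mathcal G_\alpha$ lying on a level $\ge 1$ has a predecessor in $\mathcal G_\alpha$ (it sits on a run).

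To see that $r$ is odd I would argue by contradiction: suppose $r=2i$, and let $H$ be the restriction of ${\mathcal G_\alpha}^{2i}$ to the levels $\ge l$. Every vertex of $H$ has rank $\ge 2i$ (it is a vertex of ${\mathcal G_\alpha}^{2i}$) and rank $\le 2i$ (it lies on a level $\ge l$), hence rank exactly $2i$; it is therefore one of the finite vertices of ${\mathcal G_\alpha}^{2i}$, while at the same time $H$ has a vertex on every level $\ge l$ since rank $2i$ is attained there. The finitely many level-$l$ vertices of $H$ jointly reach only finitely many vertices of $H$, so let $l+M$ be the highest level reached from level $l$. Now pick any vertex $v^\ast$ of $H$ on level $l+M+1$: in $\mathcal G_\alpha$ it has a predecessor on level $l+M$; that predecessor has rank $\ge\rank(v^\ast)=2i$ and, lying on a level $\ge l$, rank $\le 2i$, hence rank exactly $2i$, so it lies in $H$ and the joining edge is present in the induced DAG $H$. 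Descending level by level down to level $l$ exhibits $v^\ast$ as reachable in $H$ from a level-$l$ vertex, contradicting the maximality of $M$. Hence $r$ is odd.

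For the second item the engine is the following observation: if a vertex $v=(q,l')$ has an odd rank $2k+1$, then $v$ is a vertex of ${\mathcal G_\alpha}^{2k+1}$, which has no finite vertex, so $v$ reaches infinitely many vertices there; since out-degrees are bounded by $|Q|$, K\"onig's Lemma produces an infinite path out of $v$ inside ${\mathcal G_\alpha}^{2k+1}$, and every vertex on that path has rank $\ge 2k+1$ (it is in ${\mathcal G_\alpha}^{2k+1}$) and $\le 2k+1$ (ranks do not increase along a path from $v$), hence exactly $2k+1$. Thus, once an odd rank $2k+1$ occurs at some level $l'$, it occurs at every level $\ge l'$. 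It remains to check that each odd $o\le r$ is attained somewhere: if ${\mathcal G_\alpha}^{o}$ had no endangered vertex then, having also no finite vertex, it would already be the fixed point ${\mathcal G_\alpha}^{\infty}$, which --- $\mathcal G_\alpha$ being rejecting --- must be empty, forcing every vertex to be ranked below $o$ and hence $r<o$, a contradiction. So rank $o$ is attained at some level, and by the observation above at every level beyond some $l_o$; taking $l:=\max\{\,l_o \mid o\le r,\ o\text{ odd}\,\}$, a maximum over finitely many levels, yields the common level demanded by the second claim.

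I expect the evenness argument to be the one step that needs genuine care. ``All top-rank vertices are finite'' and ``the DAG is non-empty on every sufficiently high level'' are not contradictory for arbitrary bounded-width DAGs --- a disjoint sequence of one isolated vertex per level is a counterexample --- so one really has to exploit the run-DAG structure: every vertex on a positive level has a predecessor, and no rank can drop in passing from a vertex to one of its predecessors, which together let the finite reach of the level-$l$ vertices be traced back upward and collided with the existence of vertices arbitrarily far up. Everything else is routine manipulation of the two rank-monotonicity properties.
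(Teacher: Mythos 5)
Your proof is correct and follows essentially the same route as the paper's sketch: an odd-ranked vertex is endangered but not finite in ${\mathcal G_\alpha}^{o}$, so its rank propagates along an infinite path of constant rank to all later levels (giving the second item, with attainment of each odd $o\leq r$ from the fixed-point argument), while an even top rank $r$ would make all rank-$r$ vertices at late levels finite, contradicting their presence in every sufficiently late level. The only place you go beyond the paper is the predecessor-tracing argument (every vertex on a positive level lies on a run, and ranks cannot increase along edges), which rigorously justifies the step the paper merely asserts---that finiteness of all rank-$r$ vertices is incompatible with rank $r$ occurring at every late level---precisely the point you rightly flag as needing the run-DAG structure rather than general bounded-width DAG reasoning.
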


This immediately follows from what was said in Subsection~\ref{ssec:dag} on reaching a fixed point after $n+1$ iterations:
If the rank $\mathcal G_\alpha$ is $r$ then, for every odd number $o\leq r$, almost all levels contain a vertex with rank $o$, and assuming that $r$ is even implies that all vertices of ${\mathcal G_\alpha}^r$ are finite, which in turn implies that only finitely many levels contain a vertex with rank $r$ and hence leads to a contradiction.

Using this observation, the construction from Proposition~\ref{prop:basic} can be improved by essentially replacing $\mathcal R$ by the set
$\mathcal T = \{f \in \mathcal R \mid f \mbox{ is tight}\}$
of tight level rankings.
While the size $|\mathcal R|=(2n+1)^n$ of $\mathcal R$ is in $\theta\big((2n)^n\big)$, the size of $\mathcal T$,
\[\tight(n)=|\mathcal T|,\]
is much smaller.
Building on an approximation of Stirling numbers of the second kind by Temme~\cite{Temme/93/Stirling}, Yan and Friedgut, Kupferman, and Vardi~\cite{FKV/06/tighter,Yan/08/lowerComplexity} showed that $\tight(n)$ can be approximated by $(\kappa n)^n$ for a constant $\kappa\approx 0.76$, that is, they showed
\[\kappa=\lim\limits_{n\rightarrow\infty}\frac{\sqrt[n]{\tight(n)}}{n}\approx 0.76.\]

Friedgut, Kupferman, and Vardi~\cite{FKV/06/tighter} use this observation---together with other improvements---for an improved complementation algorithm that produces a complement automaton with approximately $(0.96\,n)^n$ states~\cite{FKV/06/tighter}.

Yan~\cite{Yan/08/lowerComplexity} showed for full automata---a family of automata that has exactly one accepting state, and an alphabet that encodes the possible transitions between the states of the automaton---that every nondeterministic \buchi\ automaton that accepts the complement language of a full automaton with $n+1$ states must have $\Omega\big(\tight(n)\big)$ states.

\begin{proposition}
\cite{Yan/08/lowerComplexity}
\label{prop:lower}
A nondeterministic \buchi\ automaton that accepts the complement language of a full \buchi\ automaton with $n$ states has $\Omega\big(\tight(n-1)\big)$ states.
\qed
\end{proposition}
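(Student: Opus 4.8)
The plan is to prove the bound by a fooling-set argument tailored to full automata, exploiting the fact that over the alphabet of a full automaton one has complete control over the shape of the run DAG. Fix a full automaton $\mathcal A_n=(\Sigma,Q,I,\delta,F)$ with $|Q|=n$, a single accepting state $F=\{q_F\}$, $I=Q$, and $\Sigma$ the set of all ``transition tables'': one letter for every relation $R\subseteq Q\times Q$, with $\delta(q,R)=\{q'\mid (q,q')\in R\}$. Then a word $\alpha\in\Sigma^\omega$ is nothing but a description of an arbitrary run DAG on levels contained in $Q$: every DAG that respects the level-$i\to$level-$(i{+}1)$ discipline arises as $\mathcal G_\alpha$ for a suitable $\alpha$, and every finite such DAG-prefix of length $k$ is realised by some word of length $k$. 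So we may freely prescribe, step by step, which states are live and how the edges run.

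Let $T$ be the set of tight level rankings supported on the $n-1$ non-accepting states; up to the one-state bookkeeping for $q_F$ discussed below, $|T|$ is $\tight(n-1)$. The goal is to construct, for each $f\in T$, a finite word $u_f$ and an $\omega$-word $v_f$ such that
\begin{itemize}
\item $\mathcal A_n$ \emph{rejects} $u_f v_f$, hence $u_f v_f\in\overline{\mathcal{L}(\mathcal A_n)}$, and
\item $\mathcal A_n$ \emph{accepts} $u_f v_{f'}$ for every $f'\neq f$, hence $u_f v_{f'}\notin\overline{\mathcal{L}(\mathcal A_n)}$.
\end{itemize}
Given such a family, let $\mathcal B$ be any nondeterministic \buchi\ automaton with $\mathcal{L}(\mathcal B)=\overline{\mathcal{L}(\mathcal A_n)}$. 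Each word $u_f v_f$ lies in $\mathcal{L}(\mathcal B)$, so it admits an accepting run of $\mathcal B$; let $p_f$ be the state this run occupies after reading $u_f$. If $p_f=p_{f'}$ for some $f\neq f'$, then splicing the $\mathcal B$-run on $u_f$ up to $p_f$ with the (accepting) tail of the $\mathcal B$-run on $v_{f'}$ from $p_{f'}=p_f$ yields an accepting run of $\mathcal B$ on $u_f v_{f'}$, contradicting $u_f v_{f'}\notin\mathcal{L}(\mathcal B)$. Hence $f\mapsto p_f$ is injective and $\mathcal B$ has at least $|T|=\Omega\big(\tight(n-1)\big)$ states.

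It remains to build the words. For $f\in T$ of odd rank $r$, take $u_f$ to realise a finite DAG-prefix $D_f$ whose last level is exactly $\mathrm{dom}(f)$, arranged so that a suitable periodic continuation assigns to its vertices precisely the true ranks (in the sense of Subsection~\ref{ssec:dag}) dictated by $f$; such a $D_f$ exists by free realisability, and the rank characterisation tells us which ``finite''/``endangered'' deletions that continuation must perform. The word $v_f$ is ultimately periodic: each period executes one ``tightness cycle'' compatible with $f$ -- it lets precisely the vertices that $f$ marks with the currently lowest odd rank die out and regenerates them only from strictly higher ranks, and it never routes a surviving path through $q_F$ infinitely often. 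Thus $\mathcal G_{u_f v_f}$ has bounded, eventually-tight rank, satisfies Proposition~\ref{prop:tight}, and no path meets $q_F$ infinitely often, so $\mathcal A_n$ rejects $u_f v_f$. The point is that $v_f$ is a faithful \emph{certificate} of exactly $f$: when $v_f$ is appended to $u_{f'}$ with $f'\neq f$, the first coordinate at which $f$ and $f'$ disagree creates a mismatch -- a vertex that $v_f$ expects to be finite or endangered is not, or conversely -- which lets a path escape upward and recur through $q_F$ infinitely often, so $\mathcal A_n$ accepts $u_{f'} v_f$; relabelling, this is exactly the acceptance of $u_f v_{f'}$ required above.

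\textbf{Main obstacle.} The delicate step is the simultaneous design of $v_f$ so that it is rejecting against $u_f$ yet accepting against \emph{every} $u_{f'}$, $f'\neq f$, in a uniform way. The acceptance direction is where the combinatorics of tight rankings is essential: one must show that any two distinct tight rankings admit a ``probe'' period which exposes their difference and turns it into an infinitely recurring visit to $q_F$, while controlling rank collisions, the cut-point-style behaviour of the $O$-component, and the regeneration of lower ranks from higher ones so as not to manufacture spurious acceptance against the matching prefix. A secondary, purely bookkeeping obstacle is the passage from $n$ to $n-1$: the accepting state $q_F$ must be slotted into each ranking at an odd position in a way that changes the cardinality of the index family by at most a constant factor, which is precisely what converts ``$\ge\tight(n-1)$ up to a constant'' into the stated $\Omega\big(\tight(n-1)\big)$.
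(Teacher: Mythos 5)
The paper itself offers no proof of this proposition: it is quoted verbatim from Yan \cite{Yan/08/lowerComplexity}, so your attempt has to be measured against Yan's argument, and there it falls short in an essential way. The splicing step of your fooling-set argument is fine, but the fooling family you require cannot exist at the needed size, no matter how ingeniously $u_f$ and $v_f$ are engineered. For \emph{any} nondeterministic \buchi\ automaton $\mathcal A$ (full or not), whether $uv$ is accepted depends only on the pair $\big(\delta(I,u),v\big)$: an accepting run on $uv$ is a finite run on $u$ ending in some $q\in\delta(I,u)$ followed by a run on $v$ from $q$ that visits $F$ infinitely often (the finite prefix is irrelevant for the acceptance condition). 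The same therefore holds for $\overline{\mathcal{L}(\mathcal A_n)}$, i.e.\ the right congruence of the complement language has at most $2^n$ classes. Your requirements --- $u_fv_f\in\overline{\mathcal{L}(\mathcal A_n)}$ and $u_fv_{f'}\notin\overline{\mathcal{L}(\mathcal A_n)}$ for all $f'\neq f$ --- force the reachable sets $\delta(I,u_f)$ to be pairwise distinct: if $\delta(I,u_f)=\delta(I,u_{f'})$, then $u_fv_f$ and $u_{f'}v_f$ are equi-accepted, contradicting the two requirements for the pair $(f,f')$. Hence your family can contain at most $2^n$ rankings, which is exponentially smaller than $\tight(n-1)\approx\big(0.76\,(n-1)\big)^{n-1}$. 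So the step you flag as the ``main obstacle'' is not merely delicate; in the prefix-distinguishing form you chose it is provably impossible, and the part of the construction you left open cannot be filled in.

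Yan's actual proof has a genuinely different shape, in the tradition of Michel's $n!$ bound: with each tight ranking $f$ one associates an ultimately periodic word $\alpha_f$ that the full automaton rejects, and one counts not the states of the hypothetical complement automaton $\mathcal B$ reached after finite prefixes, but states that occur \emph{infinitely often} in accepting runs of $\mathcal B$ on the words $\alpha_f$ (at designated checkpoint positions). If $\mathcal B$ has too few states, two distinct rankings $f\neq f'$ yield accepting runs sharing such a recurrent state; interleaving finite segments of $\alpha_f$ and $\alpha_{f'}$ at these positions produces a word that $\mathcal B$ accepts by pumping through the shared state, while a combinatorial lemma about distinct tight rankings shows that the full automaton accepts this mixed word as well --- contradicting $\mathcal{L}(\mathcal B)=\overline{\mathcal{L}(\mathcal A_n)}$. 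That switch from prefix classes to recurrent states, together with the mixing lemma for distinct tight rankings, is precisely the idea your proposal is missing, and it is what circumvents the $2^n$ ceiling that dooms the prefix-based approach.
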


\section{Efficient \buchi\ Complementation}
\label{sec:efficient}

To optimize the construction from Proposition~\ref{prop:basic}, we turn not only to tight functions (c.f.~\cite{FKV/06/tighter}), but also refine the cut-point construction.
While the cut-point construction of Proposition~\ref{prop:basic} tests \emph{concurrently} for all even ranks if a path has finite even rank, we argue that it is much cheaper to test this property \emph{turn wise} for all even ranks individually.
As a result, the overall construction becomes more efficient and meets, modulo a small polynomial factor in $O(n^2)$, the lower bound recently established by Yan~\cite{Yan/08/lowerComplexity}.

\subsection{Construction}
\label{ssec:comp}
The obtained state space reduction of the proposed construction compared to~\cite{FKV/06/tighter} is due to an efficient cut-point construction in combination with the restriction to tight rankings.
The improved cut-point construction is inspired by the efficient translation from generalized to ordinary \buchi\ automata.
Indeed, the acceptance condition that no trace has an \emph{arbitrary} even rank, which is reflected by the straight-forward acceptance condition of previous algorithms~\cite{Kupferman+Vardi/01/Weak,FKV/06/tighter}, can be replaced by an acceptance condition, which~only rules out that some trace has a particular even rank, but does so for all potential even ranks.

Checking the condition for a particular even rank allows for focusing on exactly this rank in the cut-point construction, which led to a significant cut in the size of the resulting state space.
While this approach cannot be taken if we literally use a generalized \buchi\ condition, the idea of cyclically considering the relevant even ranks proves to be feasible.

\begin{construction}
For a given nondeterministic \buchi\ automaton $\mathcal A=(\Sigma,Q,I,\delta,F)$ with $n=|Q|$ states, let $\mathcal C=(\Sigma,Q',I',\delta',F')$ denote the nondeterministic \buchi\ automaton with
\begin{itemize}
\item $Q' = Q_1 \cup Q_2$ with
\begin{itemize}
\item $Q_1 = 2^Q$ and
\item $Q_2 = 
\{(S,O,f,i) \in 2^Q\times 2^Q \times \mathcal T \times \{0,2,\ldots,2n-2\} \mid
\newline \hspace*{1pt} \hfill f \mbox{ is $S$-tight, } O \subseteq S \mbox{ and }  \exists i \in \omega.\  O \subseteq f^{-1}(2i)\}$,
\end{itemize}

\item $I' = \{I\}$,

\item $\delta' = \delta_1 \cup \delta_2 \cup \delta_3$ for

\begin{itemize}
\item $\delta_1: Q_1 \times \Sigma \rightarrow 2^{Q_1}$ with $\delta_1(S,\sigma)= \{\delta(S,\sigma)\}$,
\item \mbox{$\delta_2: Q_1 \times \Sigma \rightarrow 2^{Q_2}$ with $(S',O,f,i)\in \delta_2(S,\sigma)$
$\Leftrightarrow S'{=} \delta(S,\sigma)$, $O{=}\emptyset$, and $i{=}0$,}
\item $\delta_3: Q_2 \times \Sigma \rightarrow 2^{Q_2}$ with $(S',O',f',i')\in \delta_3\big((S,O,f,i),\sigma\big)$
\newline $\Leftrightarrow \, S'= \delta(S,\sigma)$, $f'\leq_\sigma^S f$, $\rank(f)=\rank(f')$, and
\begin{itemize}
\item $i' = (i + 2) \mod (\rank(f') +1)$ and $O' = {f'}^{-1}(i')$ if $O = \emptyset$ or
\item $i'=i$ and $O' =\delta(O,\sigma) \cap {f'}^{-1}(i)$ if $O \neq \emptyset$, respectively, and
\end{itemize}
\end{itemize}

\item $F' = \{\emptyset\} \; \cup \;(2^Q\times \{\emptyset\} \times \mathcal T \times \omega) \cap Q_2 $.
\end{itemize}
\end{construction}

The complement automaton $\mathcal C$ operates in two phases.
In a first phase it only traces the states reachable in $\mathcal A$ upon a finite input sequence.
In this phase, only the states in $Q_1$ and the transition function $\delta_1$ are used.
In the special case that $\mathcal A$ rejects an $\omega$-word $\alpha$ because $\mathcal A$ has no run on $\alpha$, $\mathcal C$ accepts by staying forever in phase one, because $\{\emptyset\}$ is final.

$\mathcal C$ intuitively uses its nondeterministic power to guess a point in time where all successive levels are tight and have the same rank.
At such a point, $\mathcal C$ traverses from $Q_1$ to $Q_2$, using a transition from $\delta_2$.
Staying henceforth in $Q_2$ (using the transitions from $\delta_3$), $\mathcal C$ intuitively verifies turn wise for all potential even ranks $e$ that no path has this particular even rank~$e$.
For a particular rank~$e$, it suffices to trace the positions on traces with unchanged rank~$e$ (hence $O \subseteq f^{-1}(e)$), and to 
cyclically update the designated even rank after every cut-point.

\subsection{Correctness}
To show that the automaton $\mathcal C$ from the construction introduced in the previous subsection recognises the complement language of $\mathcal A$, we first show that the complement language of $\mathcal C$ contains the language of $\mathcal A$ ($\mathcal{L(A)} \subseteq \overline{\mathcal{L(C)}}$), and then that the complement language of $\mathcal A$ is contained in the language of $\mathcal C$ ($\overline{\mathcal{L(A)}} \subseteq \mathcal{L(C)}$).

\begin{lemma}
\label{lem:incl1}
If a given nondeterministic \buchi\ automaton $\mathcal A=(\Sigma,Q,I,\delta,F)$ accepts an $\omega$-word $\alpha: \omega \rightarrow \Sigma$, then $\alpha$ is rejected by the automaton $\mathcal C=(\Sigma,Q',I',\delta',F')$. ($\mathcal{L(A)} \subseteq \overline{\mathcal{L(C)}}$)
\end{lemma}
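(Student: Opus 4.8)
The plan is to show that an accepting run of $\mathcal A$ on $\alpha$ forbids $\mathcal C$ from having an accepting run on $\alpha$. Suppose for contradiction that $\mathcal A$ accepts $\alpha$ via an accepting run $\rho$, and that $\mathcal C$ also has an accepting run $r$ on $\alpha$. Since $\rho$ is accepting, $\mathcal A$ has a run on $\alpha$, so the first component $S$ of the states visited by $r$ is never $\emptyset$ (the $i^\te$ level of $\mathcal G_\alpha$ is non-empty for every $i$, and by the definition of $\delta_1$ and $\delta_2$ the $S$-component always equals $V\cap(Q\times\{i\})$ at level $i$). Hence the only way for $r$ to be accepting is to eventually leave $Q_1$ via a $\delta_2$-transition and then stay in $Q_2$ forever, visiting $F'\cap Q_2 = (2^Q\times\{\emptyset\}\times\mathcal T\times\omega)\cap Q_2$ infinitely often. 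Fix the level $m$ at which $r$ enters $Q_2$; from then on $r$ has the form $(S_j,O_j,f_j,i_j)_{j\ge m}$ with $S_j$ equal to the $j^\te$ level of $\mathcal G_\alpha$, with $f_{j+1}\le^{S_j}_{\alpha(j)} f_j$, and with $\rank(f_j)=\rank(f_m)=:r$ constant.

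The key step is to track the vertex $(\rho(j),j)$ of the run DAG along the tail of $\rho$ and read off a contradiction from the rank bookkeeping in $f_j$. First I would establish the monotonicity fact that the sequence $f_m(\rho(m)),\ f_{m+1}(\rho(m+1)),\ f_{m+2}(\rho(m+2)),\dots$ is non-increasing: this is exactly the content of $f_{j+1}\le^{S_j}_{\alpha(j)} f_j$ applied to the edge $\big((\rho(j),j),(\rho(j+1),j+1)\big)\in E$, using $\rho(j)\in S_j$. A non-increasing sequence of naturals is eventually constant, say $f_j(\rho(j))=v$ for all $j\ge N\ge m$. Now I split on the parity of $v$. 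If $v$ is odd, then $\rho(j)\in\odd(f_j)$ for all $j\ge N$; but $\rho$ visits $F$ infinitely often, and the definition of a level ranking requires $f_j$ to map accepting states to odd numbers — which is consistent — so instead the contradiction must come from elsewhere: in the odd case the right fact to exploit is that the run $\rho$ witnesses a path of $\mathcal G_\alpha$ with a fixed odd rank, which is impossible because a vertex of odd rank $2i+1$ is endangered in ${\mathcal G_\alpha}^{2i+1}$, so no accepting vertex is reachable from it, contradicting $\mathit{inf}(\rho)\cap F\neq\emptyset$. (I would phrase this cleanly using Proposition~\ref{prop:dag} / the fixed-point discussion: the value $f_j$ assigns is an over-approximation of the true rank, and being eventually an odd constant along $\rho$ would force almost all $(\rho(j),j)$ to be endangered, contradicting that $\rho$ hits $F$ infinitely often.) If $v$ is even, then $\rho(j)\in f_j^{-1}(v)$ for all $j\ge N$, so whenever the designated even rank $i_j$ equals $v$ and a cut-point is (re)initialised, $\rho(j)$ enters $O_j$, and the $\delta_3$ update $O_{j+1}=\delta(O_j,\alpha(j))\cap f_{j+1}^{-1}(i_j)$ keeps $\rho(j+1)\in O_{j+1}$ forever after — so once $i_j=v$, the $O$-component never empties again, $r$ never revisits $F'$, contradicting acceptance of $r$. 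It remains to argue that $i_j=v$ does occur: since $v\le r$ is even and $i_j$ cycles through $\{0,2,\dots,r-1\}$ (via $i'=(i+2)\bmod(r+1)$) each time $O$ empties, and $r$ is assumed to reach $F'\cap Q_2$ (i.e. $O_j=\emptyset$) infinitely often, every even value in that range, in particular $v$, is eventually designated.

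The main obstacle I anticipate is the bookkeeping around the cut-point cycle in the even case: one has to rule out the degenerate possibility that $O$ empties infinitely often yet $i_j$ never lands on $v$, which requires knowing that each emptying of $O$ advances $i$ by exactly $2$ modulo $r+1$ and that $v$ lies in $\{0,2,\dots,r-1\}$ (equivalently $v\le r-1$, which holds since $v$ is even and $r$ is odd). A secondary subtlety is the transition from "$f_j(\rho(j))$ eventually constant and odd" to an actual contradiction — this must be routed through the semantics of ranks in Section~\ref{ssec:dag} rather than through the syntactic constraints on level rankings alone, since the tight level rankings in $\mathcal T$ are guesses and need not equal the true DAG ranks; the clean way is to note that if $\mathcal A$ rejected $\alpha$ the true rank of $\rho$ would be odd and then argue that $\mathcal A$ in fact accepts, so no consistent odd-rank labelling along an $F$-recurrent path can persist. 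Modulo these two points the argument is a routine pigeonhole-on-a-non-increasing-sequence plus a generalized-\buchi-style "the cut-point for the bad colour never closes again" argument.
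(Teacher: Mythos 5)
Your skeleton is the paper's own: a run of $\mathcal C$ that stays in $Q_1$ is rejecting because its $S$-component always contains $\rho(i)$ and is therefore never $\emptyset$; for a run that moves to $Q_2$ at some position, the constraint $f_{j+1}\leq^{S_j}_{\alpha(j)}f_j$ makes the sequence $f_j\big(\rho(j)\big)$ non-increasing and hence eventually constant, say with value $v$; and your even case --- once $O$ is (re)initialised while the designated index equals $v$, the run $\rho$ is trapped in $O$, so $O$ never empties again and $\mathcal C$ sees no further final states, while the cyclic update $i'=(i+2)\bmod(\rank(f)+1)$ together with $v\leq r-1$ (as $v$ is even, $r$ odd, and $f_j$ has rank $r$ on $S_j\ni\rho(j)$) guarantees that $v$ is designated after finitely many further visits to $F'$ --- is exactly the argument in the paper.

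The genuine gap is your odd case. There is no semantic argument available there: the functions $f_j$ along a run of $\mathcal C$ are guessed level rankings, constrained only by $\leq^S_\sigma$, tightness and rank preservation, and they bear no relation to the true ranks of $\mathcal G_\alpha$. Indeed, when $\mathcal A$ accepts $\alpha$, the rank-assignment procedure of Subsection~\ref{ssec:dag} stops with a non-empty fixed point and the vertices $(\rho(j),j)$ receive no rank at all, so the claim that an eventually constant odd value $f_j\big(\rho(j)\big)$ ``forces $(\rho(j),j)$ to be endangered'' has no basis, and your fallback (``argue that $\mathcal A$ in fact accepts, so no consistent odd labelling along an $F$-recurrent path can persist'') restates the lemma rather than proving it. The paper closes this case syntactically in one line: since $\rho(l)\in F$ for some $l\geq k$ after stabilisation, the level-ranking constraint at final states forces $v=f_l\big(\rho(l)\big)$ to be even, so the odd case never occurs. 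The parenthetical you relied on from the preliminaries (``maps all accepting states to odd numbers'', $f(F)\cap2\omega=\emptyset$) is evidently a typo for the opposite constraint: final vertices are never endangered (they reach an accepting vertex, namely themselves), so their true ranks are even, and under the literal ``odd'' reading the lemma itself would be false --- for the one-state automaton with an accepting self-loop on $a$, the constant tight ranking with value $1$ keeps $O$ empty forever and lets $\mathcal C$ accept $a^\omega\in\mathcal{L(A)}$, so no repair of your odd-case argument is possible. With the intended constraint ($f(F)$ contains only even values), your even-case argument is the entire proof and the odd-case detour should simply be removed.
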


\begin{proof}
Let $\rho:\omega\rightarrow Q$ be an accepting run of $\mathcal A$ on $\alpha$.
First, $\rho'=S_0,S_1,S_2\ldots$ with $S_0=I$ and $S_{i+1}=\delta\big(S_i,\alpha(i)\big)$ for all $i\in \omega$ is no accepting run of $\mathcal C$, because $\rho(i)\in S_i$ and hence no state of $\rho'$ is accepting ($\emptyset\neq S_i\notin F'$ for all $i\in \omega$).
Let now
\[\rho'=S_0,S_1,S_2,\ldots,S_p,(S_{p+1},O_{p+1},f_{p+1},i_{p+1}),(S_{p+2},O_{p+2},f_{p+2},i_{p+2}),\ldots\]
be a run of $\mathcal C$ on $\alpha$.
Again, we have that $\rho(j)\in S_j$ for all $j\in \omega$. Furthermore, the construction guarantees that $f_{j+1} \leq_{\alpha(j)}^S f_j$ holds for all $j>p$. The sequence
\[f_{p+1}(\rho(p+1)\big) \geq f_{p+2}(\rho(p+2)\big) \geq f_{p+3}(\rho(p+3)\big) \geq \ldots\]
is therefore decreasing, and stabilizes eventually. That is, there is a $k > p$ and a $v \leq 2n$ such that $f_{l}\big(\rho(l)\big)=v$ for all $l\geq k$.
Since $\rho$ is accepting, there is a position $l\geq k$ with $\rho(l) \in F$. Taking into account that $f_l$ is a level ranking, this implies that $f_{l}\big(\rho(l)\big)$---and hence $v$---is even.
Assuming that $\rho'$ is accepting, we can infer that, for some position $l>k$ which follows one of the first $n$ accepting states of $\rho'$ after position $k$, $i_l= v$ and $O_l=f_l^{-1}(v) \ni \rho(l)$.
It is now easy to show by induction that, for all $m\geq l$, $i_m=v$ and (using $f_m\big(\rho(m)\big)=v)$) $\rho(m) \in O_m \neq \emptyset$ hold true, which contradicts the assumption that $\rho'$ is accepting.
\end{proof}

To proof the second lemma, $\overline{\mathcal{L(A)}} \subseteq \mathcal{L(C)}$, we use Propositions~\ref{prop:dag} and~\ref{prop:tight} to infer that the run DAG $\mathcal G_\alpha$ of an $\omega$-word rejected by $\mathcal A$ is either finite or has odd bounded rank and only finitely many non-tight level rankings.
We use this to build an accepting run of $\mathcal C$ on~$\alpha$.

\begin{lemma}
\label{lem:incl2}
For a nondeterministic \buchi\ automaton $\mathcal A=(\Sigma,Q,I,\delta,F)$, the automaton $\mathcal C=(\Sigma,Q',I',\delta',F')$ accepts an $\omega$-word $\alpha: \omega \rightarrow \Sigma$ if $\alpha$ is rejected by $\mathcal A$. ($\overline{\mathcal{L(A)}} \subseteq \mathcal{L(C)}$)
\end{lemma}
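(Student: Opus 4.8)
The plan is to construct an explicit accepting run of $\mathcal C$ on a word $\alpha$ rejected by $\mathcal A$, using the structural facts about the run DAG $\mathcal G_\alpha$ guaranteed by Propositions~\ref{prop:dag} and~\ref{prop:tight}. First I would dispose of the easy case: if $\mathcal G_\alpha$ is finite, then $\mathcal A$ has only finitely many runs of bounded length, so for some level $p$ the set $S_p$ of reachable states is empty; the run $I = S_0, S_1, \ldots, S_p = \emptyset, \emptyset, \ldots$ staying forever in $Q_1$ visits the final state $\emptyset$ infinitely often, hence is accepting. So assume $\mathcal G_\alpha$ is infinite with finite (necessarily odd) rank $r$.

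Next I would fix, using Proposition~\ref{prop:tight}, a level $p$ such that from level $p$ onward every level is tight, has rank exactly $r$, and contains a vertex of each odd rank $\leq r$. Define $f_j$ for $j \geq p$ to be the ranking that assigns to each reachable state $q \in S_j$ its true rank in $\mathcal G_\alpha$ and assigns $1$ to all states not in $S_j$; by the choice of $p$ this $f_j$ is $S_j$-tight and $\rank(f_j) = r$ for all $j \geq p$. The fact that vertex ranks are non-increasing along edges of $\mathcal G_\alpha$ gives $f_{j+1} \leq^{S_j}_{\alpha(j)} f_j$, so the $f_j$-component is a legal $\delta_3$-trajectory. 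It remains to choose the $O$- and $i$-components. Starting from the $\delta_2$-transition at level $p$ (which sets $O = \emptyset$, $i = 0$), I would let the construction run deterministically: whenever $O_j = \emptyset$ we jump to the next even rank $i_{j+1} = (i_j + 2) \bmod (r+1)$ and reset $O_{j+1} = f_{j+1}^{-1}(i_{j+1})$; otherwise we keep $i$ fixed and shrink $O$ along $\delta$. This is a genuine run of $\mathcal C$ by construction.

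The heart of the argument is to show this run is accepting, i.e.\ that $O_j = \emptyset$ infinitely often. Suppose not: then from some level on, $i_j$ is a fixed even value $e \leq r-1$ and $O_j \neq \emptyset$ for all sufficiently large $j$, with $O_{j+1} = \delta(O_j, \alpha(j)) \cap f_{j+1}^{-1}(e)$ and every $O_j \subseteq f_j^{-1}(e)$. Viewed inside $\mathcal G_\alpha$, the sets $O_j$ then form a nonempty sub-DAG all of whose vertices have true rank exactly $e$ (ranks can only go down, so once a path enters rank $e$ and the $O$-set forces it to stay at rank $e$, it never drops below, and it never rises). By König's Lemma this sub-DAG contains an infinite path $\pi$ in $\mathcal G_\alpha$. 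But an infinite path of constant even rank $e$ contradicts the rank machinery of Subsection~\ref{ssec:dag}: the vertices of rank $e$ were exactly the finite vertices of ${\mathcal G_\alpha}^{e}$, so no infinite path can consist of vertices of rank $e$. This contradiction shows $O_j = \emptyset$ infinitely often, so the run visits $F'$ infinitely often and $\alpha \in \mathcal L(\mathcal C)$.

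The main obstacle I anticipate is the bookkeeping in the last step: one must argue carefully that the $O$-sets, which are only required by the definition of $Q_2$ to satisfy $O \subseteq f^{-1}(2i)$ for \emph{some} $i \in \omega$, really do trap an infinite path rather than thinning out to $\emptyset$ "by accident" in a way that is consistent with never resetting. Concretely, the subtle point is that if $O_j \neq \emptyset$ for all large $j$ then $\bigcup_j O_j$ really is (isomorphic to) an infinite sub-DAG of $\mathcal G_\alpha$ closed under successors within rank $e$, so that König's Lemma applies — this needs that every vertex in $O_{j+1}$ has a predecessor in $O_j$, which is immediate from $O_{j+1} \subseteq \delta(O_j,\alpha(j))$, but one should also check the path stays at rank exactly $e$ and not merely at rank $\geq e$ or $\leq e$. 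Once that is pinned down, the contradiction with the definition of rank $e$ (finite vertices of ${\mathcal G_\alpha}^{e}$) closes the proof.
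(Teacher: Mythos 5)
Your proof is correct and follows essentially the same route as the paper: the same case split, the same canonical run of $\mathcal C$ built from the true ranks of $\mathcal G_\alpha$ (with the deterministic $O$/$i$ updates), and a contradiction from assuming the $O$-component never empties after the last reset. The only cosmetic difference is the final step, where you extract an infinite constant-rank path via K\"onig's lemma, whereas the paper argues directly that only finitely many vertices of ${\mathcal G_\alpha}^{e}$ are reachable from the finite set $O_{j+1}\times\{j+1\}$ of vertices of even rank $e$ (hence finite in ${\mathcal G_\alpha}^{e}$), so the $O$-sets must eventually die out.
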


\begin{proof}
If $\alpha: \omega \rightarrow \Sigma$ is rejected by $\mathcal A$, then the run DAG $\mathcal G_\alpha$ has bounded rank by Proposition~\ref{prop:dag}, and by Proposition~\ref{prop:tight} almost all levels of $\mathcal G_\alpha$ have tight level rankings with the same rank $r$.
For the special case that the rank of all vertices of $\mathcal G_\alpha$ is $0$, that is, if all vertices of $\mathcal G_\alpha$ are finite,
$\rho'=S_0,S_1,S_2\ldots$
with $S_0=I$ and $S_{i+1}=\delta\big(S_i,\alpha(i)\big)$ for all $i\in \omega$ is an accepting run of $\mathcal C$ on $\alpha$.

If $\mathcal G_\alpha$ contains an infinite vertex, then we fix a position $p \in \omega$ such that the rank of all levels $p' \geq p$ of $\mathcal G_\alpha$ is $r$ and tight for some (odd) $r \geq 1$.
We now consider a run
\[\rho'=S_0,S_1,S_2,\ldots,S_p,(S_{p+1},O_{p+1},f_{p+1},i_{p+1}),(S_{p+2},O_{p+2},f_{p+2},i_{p+2}),\ldots
\mbox{ of $\mathcal C$ on $\alpha$ with}\]
 \begin{itemize}
\item $S_0=I$, $O_{p+1}=\emptyset$, and $i_{p+1}=0$,
\item $S_{j+1}=\delta\big(S_j,\alpha(j)\big)$ for all $j\in \omega$, and
\item $O_{j+1} = f_{j+1}^{-1}(i_{j+1})$ if $O_j=\emptyset$ or
\newline $O_{j+1} = \delta\big(O_j,\alpha(j)\big) \cap f_{j+1}^{-1}(i_{j+1})$ if $O_j\neq\emptyset$, respectively, for all $j>p$,
\item $f_j$ is the $S_j$-tight level ranking that maps each state $q{\in} S_j$ to the rank of $(q,j)$~$\forall j{>}p$, 
\item $i_{j+1}=i_j$ if $O_j\neq\emptyset$ or
\newline $i_{j+1}=(i_j + 2) \mod (\rank(f)+1)$ if $O_j=\emptyset$, respectively, for all $j>p$.
\end{itemize}

$\rho'$ is obviously a run of $\mathcal C$ on $\alpha$.
To show that $\rho'$ is accepting, we have to show that $O_j$ is empty infinitely many times.
Let us assume that this is not the case; that is, let us assume that there is a last element $\rho'(j)$ with $O_j = \emptyset$ and $O_k \neq \emptyset$ for all $k>j$.
(Note that $O_{p+1}=\emptyset$ is empty.)
Then we have that $i_k = i_{j+1}$ for all $k>j$, and it is easy to show by induction for all $k > j$ that $O_k\times \{k\}$ is the set of states reachable in ${\mathcal G_\alpha}^{i_{j+1}}$ from some state in $O_{j+1} \times \{j+1\}$.
But since the rank of all states in $O_{j+1} \times \{j+1\}$ is $i_{j+1}$ (and thus even), all of these states are finite in ${\mathcal G_\alpha}^{i_{j+1}}$, which implies that there are only finitely many states reachable in ${\mathcal G_\alpha}^{i_{j+1}}$ from $O_{j+1} \times \{j+1\}$, and thus contradicts the assumption that $O_k\neq\emptyset$ is non-empty for all $k>j$.
\end{proof}

The two lemmata of this subsection immediately imply the claimed language complementation:

\begin{corollary}
For a given nondeterministic \buchi\ automaton $\mathcal A=(\Sigma,Q,I,\delta,F)$, the automaton $\mathcal C$ resulting from the construction introduced in Subsection~\ref{ssec:comp} recognises the complement language of $\mathcal A$.
($\mathcal{L(C)} = \overline{\mathcal{L(A)}}$) \qed
\end{corollary}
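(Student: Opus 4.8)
The plan is to obtain the corollary directly from the two inclusion lemmas of this subsection, since between them they pin down $\mathcal{L(C)}$ from both sides. Recall that Lemma~\ref{lem:incl1} states that every $\omega$-word accepted by $\mathcal A$ is rejected by $\mathcal C$; phrased in terms of languages over the fixed universe $\Sigma^\omega$, this reads $\mathcal{L(A)} \cap \mathcal{L(C)} = \emptyset$, i.e.\ $\mathcal{L(C)} \subseteq \Sigma^\omega \smallsetminus \mathcal{L(A)} = \overline{\mathcal{L(A)}}$. Lemma~\ref{lem:incl2} gives the converse containment on the level of words---every $\omega$-word rejected by $\mathcal A$ is accepted by $\mathcal C$---which reads $\overline{\mathcal{L(A)}} \subseteq \mathcal{L(C)}$. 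Chaining the two inclusions yields $\mathcal{L(C)} \subseteq \overline{\mathcal{L(A)}} \subseteq \mathcal{L(C)}$, so all sets in this chain coincide; in particular $\mathcal{L(C)} = \overline{\mathcal{L(A)}}$, which is the claim.

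First I would make explicit the purely Boolean dictionary between the word-wise formulations used in the lemmas and the set-inclusion formulation needed here: ``$\mathcal C$ accepts $\alpha$'' means $\alpha \in \mathcal{L(C)}$, ``$\mathcal A$ rejects $\alpha$'' means $\alpha \notin \mathcal{L(A)}$, and ``$\alpha \in \overline{\mathcal{L(A)}}$'' is by definition ``$\alpha \notin \mathcal{L(A)}$'' because we work over the single universe $\Sigma^\omega$. With that dictionary in hand the corollary is a one-line consequence of the two lemmas together with antisymmetry of $\subseteq$. For completeness I would also record that $\mathcal C$ is a legitimate nondeterministic \buchi\ automaton in the sense of the preliminaries---its state set $Q' = Q_1 \cup Q_2$ is finite because $\mathcal T$ and $\{0,2,\ldots,2n-2\}$ are finite, its initial set $\{I\}$ is non-empty, and $\delta' = \delta_1 \cup \delta_2 \cup \delta_3$ maps every state and letter to a (possibly empty) set of successors---so that talking about $\mathcal{L(C)}$ is meaningful in the first place.

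There is essentially no obstacle left at this stage: all the substance resides in Lemmas~\ref{lem:incl1} and~\ref{lem:incl2}, which are proved above and which I may assume. If one had to point to a place deserving care, it is only the bookkeeping that the degenerate case ``$\mathcal A$ has no infinite run on $\alpha$'' is treated consistently in the two directions---the phase-one sink $\{\emptyset\} \in F'$ must make $\mathcal C$ accept exactly those $\alpha$, and both lemmas already account for this---but even that is subsumed by the lemma statements, so assembling the corollary from them requires nothing further.
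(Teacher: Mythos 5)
Your proposal is correct and follows exactly the paper's route: the corollary is stated there as an immediate consequence of Lemma~\ref{lem:incl1} (giving $\mathcal{L(C)} \subseteq \overline{\mathcal{L(A)}}$) and Lemma~\ref{lem:incl2} (giving $\overline{\mathcal{L(A)}} \subseteq \mathcal{L(C)}$), which is precisely your two-inclusion argument. The extra bookkeeping you add (the Boolean dictionary and the well-formedness of $\mathcal C$) is harmless but not needed beyond what the lemmas already provide.
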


\subsection{Complexity}

The costly part in previous approaches~\cite{Kupferman+Vardi/01/Weak,FKV/06/tighter} that the proposed method avoids (de facto, although not technically), is a subset construction in addition to the level rankings.
Avoiding the subset construction results in a state space reduction by a factor exponential in the size of the automaton $\mathcal A$, and even to upper bounds comparable to the established lower bounds~\cite{Yan/08/lowerComplexity}.

The subset construction is de facto avoided, because it can be encoded into the ranking function once we allow for a slightly enlarged set of output values.
For example, we could map all states not in $S$ to $-1$, and all states in $O$ to $-2$. Following this convention, the first two elements of every tuple in $Q_2$ could be pruned.
(They remain explicit in the construction because this representation is more comprehensible, and outlines the connection to the older constructions of Friedgut, Kupferman, and Vardi~\cite{Kupferman+Vardi/01/Weak,FKV/06/tighter}.

\begin{theorem}
\label{theo:upper}
For a given nondeterministic \buchi\ automaton $\mathcal A$ with $n$ states, the automaton $\mathcal C$ has $O\big(\tight(n+1)\big)$ states.
\end{theorem}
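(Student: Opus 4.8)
The plan is to count the states of $\mathcal C$ directly from the construction, bounding $|Q_1|$ and $|Q_2|$ separately and showing that the dominant term is $O\big(\tight(n+1)\big)$. The set $Q_1 = 2^Q$ contributes only $2^n$ states, which is negligible compared to $\tight(n+1)$ (since $\tight(m)$ grows like $(\kappa m)^m$ with $\kappa \approx 0.76$), so the entire work lies in bounding $|Q_2|$. For $Q_2$, the key observation is the one already flagged in the ``Complexity'' discussion: the first two components $S$ and $O$ of a tuple $(S,O,f,i)$ are not free parameters but are essentially determined by (or can be folded into) the ranking function $f$. Concretely, since $f$ is required to be $S$-tight, the set $S$ is recoverable from $f$ as the complement of the states that $f$ sends to the ``outside'' value — equivalently, one re-encodes $f$ as a function $Q \to \{-1, 0, 1, \ldots, 2n\}$ where $-1$ marks states not in $S$, so that the pair $(S,f)$ collapses to a single enriched ranking over an alphabet of size $2n+2$. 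This is exactly a level ranking of an automaton with one extra state, which is why the bound comes out as $\tight(n+1)$ rather than $\tight(n)$.

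First I would make the $(S,f) \leftrightarrow$ enriched-ranking correspondence precise and observe it is injective, so the number of admissible pairs $(S,f)$ with $f$ being $S$-tight is at most (a constant times) $\tight(n+1)$. Next I would handle the remaining two coordinates $O$ and $i$: the index $i$ ranges over $\{0,2,\ldots,2n-2\}$, contributing only a factor $O(n)$, and $O$ is constrained by $O \subseteq f^{-1}(2i)$ together with $O \subseteq S$. Here I would use the same folding trick a second time — mark the states of $O$ with a second special value $-2$ — so that the triple $(S,O,f)$ is encoded as a single function $Q \to \{-2,-1,0,1,\ldots,2n\}$, an alphabet of size $2n+3$. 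The claim is that the number of such encodings that arise from legitimate $S$-tight $f$ is still $O\big(\tight(n+1)\big)$: the states marked $-2$ all had a common even $f$-value $2i$, so replacing $2i$ by $-2$ on those states does not create a new odd value and does not change the rank, hence the resulting object is still (in bijective correspondence with) a tight ranking over $n+1$ states, and the extra marker $-1$ for $Q \smallsetminus S$ absorbs into that single extra state just as before. Multiplying by the $O(n)$ choices of $i$ and adding the $2^n$ states of $Q_1$ gives the total $O\big(\tight(n+1)\big)$.

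The step I expect to be the main obstacle is verifying that the folded encoding really lands inside (a set of size proportional to) $\tight(n+1)$ rather than something slightly larger — i.e.\ pinning down exactly why one extra state, and not two, suffices even though we introduce two special markers $-1$ and $-2$. The resolution is that the $-2$-marked states do not need a ``fresh'' slot: they reuse the value $2i$ that $f$ already used, and the constraint $O \subseteq f^{-1}(2i)$ guarantees tightness is preserved; only the $-1$-marked states (those outside $S$) genuinely need the extra slot, and $S$-tightness already forces $f$ to send them all to the single value $1$, so they behave uniformly like one added non-accepting state. Once this is argued carefully, the combinatorics reduces to a clean application of the definition of $\tight$, and the $O(n)$ and $2^n$ terms are subsumed since $\tight(n+1)/\tight(n)$ is itself superpolynomial. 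I would close by noting that this matches, up to the promised polynomial factor, the lower bound of Proposition~\ref{prop:lower}.
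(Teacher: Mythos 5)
Your encoding of $(S,O,f)$ by a single function with special markers for $Q\smallsetminus S$ and for $O$ is exactly the device used in the paper, as is the treatment of $Q_1$ as a negligible $2^n$ term and of the index $i$ as an $O(n)$ factor. The problem is the final accounting. You spend the ``$+1$'' of $\tight(n+1)$ on absorbing the marker for $Q\smallsetminus S$ (bounding the folded triples $(S,O,f)$ by $O\big(\tight(n+1)\big)$) and \emph{then} multiply by the $O(n)$ choices of $i$; as written this yields only $O\big(n\,\tight(n+1)\big)$, not $O\big(\tight(n+1)\big)$. The justification you offer for absorbing that factor --- that ``$\tight(n+1)/\tight(n)$ is itself superpolynomial'' --- is false: since $\tight(m)\approx(\kappa m)^m$ with $\kappa\approx 0.76$, the ratio $\tight(n+1)/\tight(n)$ is $\Theta(n)$ (roughly $\kappa e\,n\approx 2.1\,n$), i.e.\ linear. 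A multiplicative factor of $n$ can therefore never be absorbed into the same $\tight(n+1)$ term; it bumps the bound to $\Theta\big(\tight(n+2)\big)$. So the proposal, taken literally, proves a bound that is weaker than the theorem by a factor of $n$.

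The paper's bookkeeping spends the ``$+1$'' on the index $i$ instead, and this is where your argument needs repair. For each \emph{fixed} $i$, the folded functions $g:Q\rightarrow\{-2,-1,0,\ldots,r\}$ that are onto $\{1,3,\ldots,r\}$ number only $O\big(\tight(n)\big)$: whichever of the two markers actually occur in the image can be relabelled to the fresh odd values $r+2$ (and $r+4$), which injects each of the constantly many cases into tight rankings over the \emph{same} $n$ states --- no extra state is needed for the markers at all, and in particular the states outside $S$ do not require a fresh slot any more than the states in $O$ do. Only after this does the factor $n$ from the cyclic counter enter, giving $O\big(n\,\tight(n)\big)=O\big(\tight(n+1)\big)$, which is exactly the paper's chain of estimates. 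So your folding idea is sound; what is missing is the observation that the two markers cost only a constant factor, so that the single factor of $n$ hidden in passing from $\tight(n)$ to $\tight(n+1)$ remains available to pay for $i$.
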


\begin{proof}
$Q_1$ is obtained by a simple subset construction, and hence $Q_1 \in O(2^n)$, which is a small subset of $O\big(\tight(n+1)\big)$.
For a fixed $i$, a state $(S,O,f,i)$ with an $S$-tight level ranking $f$ of rank $r$ can be represented by a function $g:Q\rightarrow \{-2,-1,\ldots,r\}$ that maps every state $q \in O$ to $g(q)=-1$, every state $q \in Q \smallsetminus S$ not in $S$ to $g(q)=-2$, and every other state $q\in S\smallsetminus O$ to $g(q)=f(q)$.
Every such function $g$ either has a domain $g(Q) \subseteq \{0,1,\ldots,r\}$ and is onto $\{1,3,\ldots,r\}$, or is a function to $\{-2,-1,\ldots,r\}$ and onto $\{-1\}\cup\{1,3,\ldots,r\}$ or $\{-2\}\cup\{1,3,\ldots,r\}$.
The size of all three groups of functions is hence in $O\big(\tight(n)\big)$ (for a fixed $i$), which results in an overall size in $O\big(\tight(n+1)\big)=O\big(n\cdot\tight(n)\big)$.
\end{proof}

Together with Proposition~\ref{prop:lower}, this establishes tight bounds for \buchi\ complementation:

\begin{corollary}
The minimal size of the state space of a nondeterministic \buchi\ automaton that accept the complement language of a nondeterministic \buchi\ automaton with $n$ states is in $\Omega\big(\tight(n-1)\big)$ and $O\big(\tight(n+1)\big)$.
\qed
\end{corollary}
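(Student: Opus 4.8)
The plan is to bound the number of states of $\mathcal C$ by counting the states in $Q_1$ and $Q_2$ separately and showing each is in $O\big(\tight(n+1)\big)$. The set $Q_1 = 2^Q$ contributes only $2^n$ states, which is negligible compared to $\tight(n+1)$ since $\tight(n+1) = \Theta\big((\kappa n)^{n+1}\big)$ grows super-exponentially. So the real work is counting $Q_2$.

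For $Q_2$, I would first fix the cyclic counter value $i$ and count the number of admissible triples $(S,O,f)$ with that fixed $i$. The key observation is that, just as the authors note in the complexity discussion, the pair $(S,O)$ can be absorbed into the ranking function by extending its range: encode $q \notin S$ by a fresh value (say $-2$) and $q \in O$ by another fresh value (say $-1$), keeping $f(q)$ for $q \in S \smallsetminus O$. Since $f$ is $S$-tight of some odd rank $r$ and $O \subseteq f^{-1}(2i)$ for the fixed $i$, the resulting function $g: Q \to \{-2,-1,0,1,\ldots,r\}$ falls into one of a bounded number of shapes: either $g$ has range in $\{0,\ldots,r\}$ and is onto the odds $\{1,3,\ldots,r\}$ (this is exactly a tight ranking, $\tight(n)$ of them), or $g$ uses the value $-1$ and is onto $\{-1\}\cup\{1,3,\ldots,r\}$, or $g$ uses $-2$ and is onto $\{-2\}\cup\{1,3,\ldots,r\}$. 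Each of these three families has size $O\big(\tight(n)\big)$ — the latter two because adding one extra obligatory value to an onto-the-odds condition is, up to relabeling, the same counting problem on at most $n$ states with one more required value, which is absorbed into the $O(\cdot)$. Hence for each fixed $i$ there are $O\big(\tight(n)\big)$ states.

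Finally, since $i$ ranges over $\{0,2,\ldots,2n-2\}$, there are at most $n$ choices of $i$, so $|Q_2| = O\big(n \cdot \tight(n)\big)$. It remains only to observe that $n \cdot \tight(n) = O\big(\tight(n+1)\big)$: this follows from the growth rate $\tight(n) \approx (\kappa n)^n$, which gives $\tight(n+1)/\tight(n) \approx \kappa n \cdot (1 + 1/n)^n \to \kappa e \cdot n$, so $\tight(n+1) = \Omega\big(n \cdot \tight(n)\big)$ and the bound follows. Combining $|Q_1| = O(2^n)$ and $|Q_2| = O\big(\tight(n+1)\big)$ yields $|Q'| = O\big(\tight(n+1)\big)$.

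The main obstacle — though it is more a matter of care than difficulty — is justifying that the two "shifted" families of functions (those hitting $-1$ or $-2$) are each $O\big(\tight(n)\big)$ rather than something larger; one wants to argue cleanly that requiring a function on $Q$ to be onto $\{-1\}\cup\{1,3,\ldots,r\}$ is, by treating $-1$ as just another value that must be hit, bounded by the count of tight-like rankings on $n$ elements, which is $\tight(n)$ up to the constant. A secondary point is to make the asymptotic identity $n \cdot \tight(n) = \Theta\big(\tight(n+1)\big)$ precise using the Temme/Stirling approximation already cited in the excerpt. Neither step is deep, so the proof is short.
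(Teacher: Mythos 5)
Your proposal addresses only half of the statement. The corollary asserts two things: a lower bound $\Omega\big(\tight(n-1)\big)$ and an upper bound $O\big(\tight(n+1)\big)$ on the minimal size of a complementing automaton. Everything you write concerns the upper bound, and there it follows essentially the paper's own argument (the proof of Theorem~\ref{theo:upper}): absorb $(S,O)$ into the ranking by the extra values $-1$ and $-2$, count $O\big(\tight(n)\big)$ states per fixed counter value $i$, multiply by the $n$ possible values of $i$, and use $n\cdot\tight(n)=O\big(\tight(n+1)\big)$. That part is sound (with the same mild imprecision as the paper about functions using both $-1$ and $-2$ simultaneously, which does not affect the asymptotics), and it is legitimate that you implicitly rely on the earlier correctness result $\mathcal{L(C)}=\overline{\mathcal{L(A)}}$ -- without it, bounding $|Q'|$ would say nothing about the minimal size of a \emph{complement} automaton.

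The genuine gap is the $\Omega\big(\tight(n-1)\big)$ direction, which you never mention. In the paper this is exactly Proposition~\ref{prop:lower} (Yan's lower bound via full automata): every nondeterministic \buchi\ automaton accepting the complement language of a full \buchi\ automaton with $n$ states needs $\Omega\big(\tight(n-1)\big)$ states, and since full automata are themselves $n$-state nondeterministic \buchi\ automata, the minimal complement size cannot be smaller than this for all $n$-state automata. The fix is a one-line citation, but as written your proof establishes only the upper-bound half of the corollary.
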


\section{Reduced Average Outdegree}

A flaw in the construction presented in Section~\ref{sec:efficient} is that it is optimal only with respect to the state space of the automata.
In~\cite{Kupferman+Vardi/01/Weak}, Kupferman and Vardi discuss how to reduce the number of edges such that the bound on the number of edges becomes trilinear in the alphabet size, the bound on the number of states, and the rank of the resulting automaton (see also~\cite{GKSV/03/Complementing}).
In this section we improve the construction from the previous section such that the bound on the number of edges is merely bilinear in the bound on the number of states and the size of the input alphabet.
The technique can be adapted to generally restrict the outdegree to $|\delta(q,\sigma)| \leq 2$ when level rankings are not required to be tight.

\subsection{Construction}
\label{subs:ConD}
The automaton $\mathcal C$ obtained from the construction described in Section~\ref{sec:efficient} operates in two phases.
In a first phase, it stays in $Q_1$ and only tracks the reachable states of the \buchi\ automata $\mathcal A$ it complements.
It then guesses a point $p \in \omega$ such that all levels $j>p$ of $\mathcal G_\alpha$ have a tight level ranking to transfer to $Q_2$.

We improve over the construction of Section~\ref{sec:efficient} by restricting the number of entry points to $Q_2$ from $O\big(\tight(n)\big)$ to $O(n!)$, and by restricting the number of outgoing transitions $|\delta(q,\sigma)|\leq 2$ for all states $q \in Q_2$ and input letters $\sigma \in \Sigma$ to two.
The latter is achieved by allowing only the successor $(S,O,f,i)\in \delta_3(q,\sigma)$ with a point wise maximal function $f$ (the $\gamma_3$-transitions) or with a function $f$ that is maximal among the final states $(S,O,f,i)\in \delta_3(q,\sigma) \cap F$ among them (the $\gamma_4$-transitions).
If such elements exist, then they are unique.

The first restriction is achieved by restricting $\delta_2$ to states $(S,O,f,i)$ for which $f$ is maximal with respect to $S$.
We call an $S$-tight level ranking $f$ with rank $r$ \emph{maximal with respect to $S$} if it maps all final states $q \in F \cap S$ in $S$ to $r-1$, exactly one state to every odd number $o<r$ smaller than $r$ ($|f^{-1}(o)|=1$) and all remaining states of $S$ to $r$, and denote the set of tight rankings that are maximal with respect to $S$ by $\mathcal M_S= \{f \in \mathcal T \mid f \mbox{ is maximal with respect to } S\}$.

As there are only $|Q_1|\leq 2^n$ states in $Q_1$, the impact of their high outdegree is outweighed by the small outdegree ($|\delta(q,\sigma)|\leq 2$) of the remaining $|Q_2|\in O\big(\tight(n+1)\big)$ states.

\begin{construction}
For a given nondeterministic \buchi\ automaton $\mathcal A=(\Sigma,Q,I,\delta,F)$ with $n=|Q|$ states, let $\mathcal D=(\Sigma,Q',I',\gamma,F')$ denote the nondeterministic \buchi\ automaton with~$Q'$, $I'$, $F'$, $Q_1$, $Q_2$, and $\delta_1$ as in the construction from Section~\ref{sec:efficient}, and with $\gamma = \delta_1 \cup \gamma_2 \cup \gamma_3 \cup \gamma_4$~for
\begin{itemize}
\item $\gamma_2: Q_1 \times \Sigma \rightarrow 2^{Q_2}$ with $(S',O,g,i)\in \gamma_2(S,\sigma) \Leftrightarrow (S',O,g,i)\in \delta_2(S,\sigma)$ and $g \in \mathcal M_{S'}$,

\item $\gamma_3: Q_2 \times \Sigma \rightarrow 2^{Q_2}$ with $\gamma_3\big((S,O,f,i),\sigma\big)= \{\maxi_g\big\{(S',O',g,i'){\in} \delta_3\big((S,O,f,i),\sigma\big)\big\}$,

\item $\gamma_4: Q_2 \times \Sigma \rightarrow 2^{Q_2}$ with $(S',O'',g',i')\in \gamma_4\big((S,O,f,i),\sigma\big)$ if
\newline \hspace*{1pt} \hfill $(S',O',g,i')\in \gamma_3\big((S,O,f,i),\sigma\big)$, $O''=\emptyset$, $i'\neq 0 \vee O' = \emptyset$, and
\newline \hspace*{1pt} \hfill $g'(q)=g(q)-1$ for all $q \in O'$ and $g'(q)=g(q)$ otherwise, \hspace*{4.4mm}
\end{itemize}
where $\maxi_g\big\{(S',O',g,i')\in \delta_3\big((S,O,f,i),\sigma\big)\big\}$ selects the unique element with a (point wise) maximal function $g$.
The supremum over all function obviously exists, but it is not necessarily tight. (In this case, $\maxi_g$ returns the empty set.)
Since $\delta_3$ is strict with respect to the selection of the other three elements $S'$, $O'$ and $i'$ for a fixed ranking function, the mapping of $\gamma_3$ consists only of singletons and the empty set.
\end{construction}

While $\gamma_3$ selects a maximal successor, $\gamma_4$ selects a maximal final successor, which only requires to decrease the value assigned to the states in $O'$ by the ranking function $g$ by one.
(Which cannot be done if their value is already $0$, hence the restriction $i\neq 0$ or $O'=\emptyset$.)
The tightness of $g'$ is then inherited from the tightness of $g$.

\subsection{Correctness}

While it is clear that the language of $\mathcal D$ is contained in the language of $\mathcal C$, the converse is less obvious. To prove $\mathcal{L(C)}\subseteq \mathcal{L(D)}$, we show that any $\omega$-word $\alpha$ rejected by $\mathcal A$ will be accepted by $\mathcal D$ by exploiting the ``standard'' run $\rho'$ of $\mathcal C$ on $\alpha$ from the proof of Lemma \ref{lem:incl2} to build an accepting run $\rho''$ of $\mathcal D$ on $\alpha$.

\begin{proposition}
\label{prop:trans}
For a given nondeterministic \buchi\ automaton $\mathcal A=(\Sigma,Q,I,\delta,F)$, the automaton $\mathcal D$ resulting from the construction introduced in Subsection~\ref{subs:ConD} accepts an $\omega$-word $\alpha: \omega \rightarrow \Sigma$ if and only if $\alpha$ is rejected by $\mathcal A$. ($\mathcal{L(D)} = \overline{\mathcal{L(A)}}$)
\end{proposition}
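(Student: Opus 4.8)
The plan is to establish the two inclusions separately. The inclusion $\mathcal{L(D)}\subseteq\mathcal{L(C)}$ is essentially immediate: every transition in $\gamma=\delta_1\cup\gamma_2\cup\gamma_3\cup\gamma_4$ is by construction a transition of $\mathcal C$. For $\gamma_2\subseteq\delta_2$ and $\gamma_3\subseteq\delta_3$ this is explicit; for $\gamma_4$ one checks that if $(S',O',g,i')\in\delta_3((S,O,f,i),\sigma)$ then the variant $(S',\emptyset,g',i')$ with $g'$ obtained from $g$ by decreasing the values on $O'$ by one is again a $\delta_3$-successor of $(S,O,f,i)$ --- here one uses $f'\le^S_\sigma f$ being preserved under lowering values, that $\rank(g')=\rank(g)=\rank(f)$ (which is why tightness of $g'$ is needed, inherited from $g$ as the construction notes), and that emptying the $O$-component is always a legal move in $\delta_3$. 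Since $\gamma$-runs are $\delta'$-runs and $F'$ is the same set, any accepting run of $\mathcal D$ is an accepting run of $\mathcal C$, and we may invoke Lemma~\ref{lem:incl1} to conclude $\mathcal{L(D)}\subseteq\overline{\mathcal{L(A)}}$ as well.

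The substantial direction is $\overline{\mathcal{L(A)}}\subseteq\mathcal{L(D)}$, and I would prove it by transforming the canonical accepting run $\rho'$ of $\mathcal C$ on a rejected word $\alpha$ built in the proof of Lemma~\ref{lem:incl2} into an accepting run $\rho''$ of $\mathcal D$. In the trivial case where all vertices of $\mathcal G_\alpha$ are finite, $\rho''$ stays in $Q_1$ forever and is accepting because $\{\emptyset\}\in F'$, exactly as for $\mathcal C$. Otherwise, fix $p$ as in Lemma~\ref{lem:incl2} so that all levels $j>p$ of $\mathcal G_\alpha$ are tight with the same odd rank $r\ge 1$. The difficulty is that $\gamma_2$ only enters states $(S',O,g,i)$ with $g\in\mathcal M_{S'}$, i.e.\ $g$ maximal with respect to $S'$, whereas the true level ranking of $\mathcal G_\alpha$ at level $p+1$ need not be maximal; and $\gamma_3,\gamma_4$ only offer the pointwise-maximal successor (resp.\ the maximal final successor), not the ``true-rank'' successor used in Lemma~\ref{lem:incl2}. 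So $\rho''$ cannot simply track ranks of $\mathcal G_\alpha$. Instead I would define $\rho''$ to follow, from position $p+1$ on, the unique $\gamma_3$/$\gamma_4$ successors dictated by the automaton itself (taking $\gamma_4$ precisely at positions where we want to reset the $O$-component), starting from the entry state in $\gamma_2(S_p,\alpha(p))$ whose ranking is the maximal $S_{p+1}$-tight ranking of rank $r$ --- this state exists because $r$ is the rank of level $p+1$, so $\mathcal M_{S_{p+1}}$ contains a ranking of that rank.

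The heart of the argument is then to show this $\rho''$ is accepting, i.e.\ that its $O$-component is empty infinitely often. Here I expect the main obstacle, and I would handle it by a domination/simulation argument: prove by induction on $j>p$ that the ranking function $f_j''$ appearing in $\rho''$ pointwise dominates the true rank function $f_j$ of $\mathcal G_\alpha$ used in the Lemma~\ref{lem:incl2} run, and more importantly that $f_j''$ is \emph{eventually constant} --- since ranking functions only decrease along a run (by $f'\le^S_\sigma f$) and take values in a finite set, and the $\gamma_4$-transitions decrease values only on $O'$ by one, the sequence $f_{p+1}''\ge f_{p+2}''\ge\cdots$ stabilises to some $f^\ast$ from some position $q_0$ onward. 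From $q_0$ on the $\gamma_4$-transitions can no longer lower any value, so past $q_0$ the run uses only $\gamma_3$ except that the $O$-resets still cycle the index $i$ through $\{0,2,\ldots,\rank(f^\ast)\}$; from this point $\rho''$ behaves exactly like the canonical $\mathcal C$-run with stable ranking $f^\ast$, and the finiteness argument of Lemma~\ref{lem:incl2} --- that the states carried in the $O$-component, having even rank in $\mathcal G_\alpha^{i}$ for the current designated even $i$, are finite there and hence die out --- applies verbatim to force $O$ empty infinitely often. The remaining bookkeeping is to confirm that the $\gamma_2$-entry and the $\gamma_3$/$\gamma_4$-successors are always defined (nonempty) along $\rho''$: $\gamma_3$ is defined as long as the pointwise-supremum successor is tight of the right rank, which it is because a legal $\delta_3$-successor of that rank exists (namely the true-rank one), so the supremum, lying between it and $f$, is also tight of rank $r$; and $\gamma_4$ is used only when $i\neq 0$ or $O'=\emptyset$, which we arrange by choosing where to apply it. Combining both inclusions with the earlier corollary $\mathcal{L(C)}=\overline{\mathcal{L(A)}}$ gives $\mathcal{L(D)}=\overline{\mathcal{L(A)}}$.
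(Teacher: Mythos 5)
Your handling of the easy inclusions $\mathcal{L(D)}\subseteq\mathcal{L(C)}\subseteq\overline{\mathcal{L(A)}}$ and your overall plan for $\overline{\mathcal{L(A)}}\subseteq\mathcal{L(D)}$ (enter $Q_2$ with a maximal $S_{p+1}$-tight ranking and transform the run of Lemma~\ref{lem:incl2} into a run of $\mathcal D$) match the paper, but the heart of the argument --- that the constructed run $\rho''$ is accepting --- has a genuine gap. First, your stabilisation step is unsound: the constraint $f'\leq^S_\sigma f$ only bounds values along edges ($f'(q')\leq f(q)$ for $q'\in\delta(q,\sigma)$), not per state, so the sequence of ranking functions in $\rho''$ need not be pointwise non-increasing and your stable $f^\ast$ and position $q_0$ need not exist. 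Second, and more fundamentally, even with a stabilised ranking the finiteness argument of Lemma~\ref{lem:incl2} does not apply ``verbatim'': there, the states collected in $O$ have \emph{true} even rank $i$ in $\mathcal G_\alpha$ and are therefore finite in ${\mathcal G_\alpha}^{i}$; in a run of $\mathcal D$ the $O$-component only records states whose \emph{assigned} value is $i$, and since $\gamma_3$ always picks the pointwise-maximal successor, a path whose true rank is the odd number $i-1$ can be held at assigned value $i$ forever, so if you use only $\gamma_3$ past $q_0$ the $O$-component may never empty. Correcting this over-approximation is precisely the purpose of the $\gamma_4$-transitions, so a proof must say exactly when $\gamma_4$ is taken and relate $\rho''$ to the true-rank run $\rho'$; your ``which we arrange by choosing where to apply it'' leaves exactly this open.

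The paper schedules $\gamma_4$ at precisely those steps where $\rho'$ has $O_{j+1}=\emptyset$ and its index agrees with that of $\rho''$, and proves by induction both $g_j\geq f_j$ and the strengthened invariant that $q\in O_j'$ implies $q\in O_j$ or $g_j(q)>f_j(q)$. This invariant guarantees that the side condition $i'\neq 0\vee O'=\emptyset$ of $\gamma_4$ is satisfiable at those steps and delivers the final contradiction: if $O_j'$ emptied only finitely often, the index of $\rho''$ would be eventually constant, say $v$, while $\rho'$, being accepting and cycling its index, reaches within its next $n$ final positions one with index $v$, and there $\rho''$ takes a $\gamma_4$-step and is final. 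None of this machinery appears in your sketch. In addition, your argument that $\gamma_3$ stays nonempty --- ``the supremum, lying between the true-rank successor and $f$, is also tight'' --- is not valid: pointwise dominating a tight ranking of rank $r$ fixes the rank but not surjectivity onto $\{1,3,\ldots,r\}$. The paper instead chooses $g_{p+1}$ so that, for every odd $o\leq r$, some state of true rank $o$ receives value $o$, and then sandwiches ($g_j\leq o$ along the true-rank-$o$ sub-DAG of ${\mathcal G_\alpha}^{o}$, and $g_j\geq f_j=o$ by domination) to show that every odd value is realised at every later level. As it stands, the crucial direction $\overline{\mathcal{L(A)}}\subseteq\mathcal{L(D)}$ is not established.
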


\begin{proof}
We show $\mathcal{L(D)} = \overline{\mathcal{L(A)}}$ by demonstrating $\overline{\mathcal{L(A)}} \subseteq \mathcal{L(D)} \subseteq \mathcal{L(C)} \subseteq \overline{\mathcal{L(A)}}$, where the second inclusion is implied by the fact that every (accepting) run of $\mathcal D$ is also an (accepting) run of $\mathcal C$, and the third inclusion is shown in Lemma~\ref{lem:incl1}.

To demonstrate $\overline{\mathcal{L(A)}} \subseteq \mathcal{L(D)}$, we reuse the proof of Lemma~\ref{lem:incl2} to obtain an accepting run $\rho'=S_0,S_1,S_2,\ldots,S_p,(S_{p+1},O_{p+1},f_{p+1},i_{p+1}),(S_{p+2},O_{p+2},f_{p+2},i_{p+2}),\ldots$ for $\mathcal C$, where $f_j(q)$ is the rank of $(q,j)$ in $\mathcal G_\alpha$ for all $j>p$ and $q \in S_j$.
(If $\mathcal A$ has no run on $\alpha$, then $\mathcal D$ has the same accepting standard run on $\alpha$ that stays in $Q_1$ as $\mathcal C$.)

Let us pick an $S_{p+1}$-tight ranking function $g_{p+1}$ with the same rank as $f_{p+1}$ that is maximal with respect to $S_{p+1}$.
We show that we then can construct the run
\[\rho''=S_0,S_1,S_2,\ldots,S_p,(S_{p+1},O_{p+1}',g_{p+1},i_{p+1}'),(S_{p+2},O_{p+2}',g_{p+2},i_{p+2}'),\ldots\]
of $\mathcal D$ on $\alpha$ that satisfies $O_{p+1}=\emptyset$, and $i_{p+1}=0$ and, for all $j>p$,

\begin{itemize}
\item $(S_{j+1},O_{j+1}',g_{j+1},i_{j+1}') \in \gamma_4\big((S_j,O_j',g_j,i_j'),\alpha(j)\big)$ if $O_{j+1} = \emptyset$ and $i_{j+1}=i_{j+1}'$
\newline (note that $i_{j+1}'$ does not depend on taking the transition from $\gamma_3$ or $\gamma_4$), and 
\item $(S_{j+1},O_{j+1}',g_{j+1},i_{j+1}') \in \gamma_3\big((S_j,O_j',g_j,i_j'),\alpha(j)\big)$ otherwise.
\end{itemize}

To show by induction that $g_j \geq f_j$ holds true for all $j>p$ (where $\geq$ is the point wise comparison), we strengthen the claim by claiming additionally that if, for some position $k>p$, $\rho'(k)$ is a final state, $i_{k+1}'=i_{k+1}$ holds true, and $k'>k$ is the next position for which $\rho''(k')$ is final, then $q \in O_j'$ implies $q \in O_j \vee g_j(q) > f_j(q)$ for all $k < j \leq k'$.

For $j=p+1$ this holds trivially (basis).
For the induction step, let us first consider the case of $\gamma_3$-transitions.
Then $g_{j+1}\geq f_{j+1}$ is implied, because $g_{j+1}$ is maximal among the $S_{j+1}$-tight level rankings $\leq_{\alpha(j)}^{S_j} g_j$, and $g_j\geq f_j$ holds by induction hypothesis.
If $\rho'(j)$ is a final state and $i_{j+1}'=i_{j+1}$ holds true, then $O_{j+1} = {f_j}^{-1}(i_{j+1})$, and hence $q \in O_j'$ implies $g_{j+1}(q)=i_{j+1}'=i_{j+1}$, which implies $q \in O_{j+1} \vee g_{j+1}(q) > f_{j+1}(q)$ (using $g_{j+1}\geq f_{j+1}$).

If a $\gamma_4$-transition is taken, then taking a $\gamma_3$-transition implied $g_{j+1}\geq f_{j+1}$ and $g_{j+1}(q)=i_{j+1}'=i_{j+1} \Rightarrow g_{j+1}(q) > f_{j+1}(q)$ (note that $O_{j+1}$ is empty) by the previous argument.
This immediately implies $g_{j+1}\geq f_{j+1}$ for the $\gamma_4$-transition.
Consequently, $O_{j+1} = {f_j}^{-1}(i_{j+1})$ (which holds as $\rho'(j)$ is final)
entails that $q \in O_j'$ implies $q \in O_{j+1} \vee g_{j+1}(q) > f_{j+1}(q)$.

It remains to show that all functions $g_j$ are $S_j$-tight level rankings.
To demonstrate this, let $q\in S_{p+1}$ be a state of the automaton $\mathcal A$ such that $g_{p+1}(q)=o$ is the rank of $(q,p+1)$ in $\mathcal G_\alpha$ for an odd number $o \leq r$.
(Such a state exists for every odd number $o \leq r$ by construction.)
Since $(q,p+1)$ is endangered but not finite in ${\mathcal G_\alpha}^o$, all nodes $(q',j)$ with $j> p$ reachable from $(q,p+1)$ in ${\mathcal G_\alpha}^o$ form an infinite connected sub-DAG of ${\mathcal G_\alpha}^o$, all of whose nodes have rank $o$. (Which, by the proof in Lemma~\ref{lem:incl2}, entails that $f_j(q')=o$ holds for every vertex $(q',j)$ of this sub-DAG.)
By definition of $\rho''$, it is easy to show by induction that $g_{j}(q')\leq o$ holds for all of these nodes $(q',j)$.
As we have just demonstrated $g_j(q')\geq f_j(q')$, this entails $g_j(q') = o$.
Since $o$ can be any odd number less or equal to the rank $r$ of $\mathcal G_\alpha$, and since there is, for every $j>p$, some vertex $(q',j)$ reachable from $(q,p+1)$ in ${\mathcal G_\alpha}^o$, $g_j$ is an $S_j$-tight level ranking for all $j>p$.

Finally, the assumption that $O_j'$ is empty only finitely many times implied that there was a last position $k$ such that $O_k'$ is empty. But this implies that $i_j$ is stable for all $j>k$, and within the next $n$ visited fixed points in $\rho'$ there is one that refers to this $i_{k+1}$.
By construction of $\rho''$, this position is a final state in $\rho''$, too. $\lightning$
\end{proof}

\subsection{Complexity}
Extending the tight bound of Section~\ref{sec:efficient} for the state space to a tight bound on the size of the complement automaton is simple:
The mappings of $\delta_1$, $\gamma_3$, and $\gamma_4$ consist of singletons or the empty set, such that only the size of $\gamma_2$ needs to be considered more closely.

\begin{theorem}
\label{theo:upperSize}
For a given nondeterministic \buchi\ automaton $\mathcal A$ with $n$ states and an alphabet of size $s$, the automaton $\mathcal D$ has size $O\big(s\,\tight(n+1)\big)$.
\end{theorem}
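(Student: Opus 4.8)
The plan is to bound the size of $\mathcal D$, i.e.\ the quantity $\sum_{q\in Q',\ \sigma\in\Sigma}1+|\gamma(q,\sigma)|$, by separating the contributions of the two state classes $Q_1$ and $Q_2$ and of the four transition families $\delta_1,\gamma_2,\gamma_3,\gamma_4$. First I would observe that $\delta_1$ contributes only $O(s\cdot 2^n)$, since $|Q_1|\le 2^n$ and each $\delta_1(S,\sigma)=\{\delta(S,\sigma)\}$ is a singleton; this is dominated by $O(s\,\tight(n+1))$. Likewise, for every state in $Q_2$ the maps $\gamma_3$ and $\gamma_4$ are, by the construction and the remark that $\maxi_g$ picks a unique element (or the empty set), each of size at most one per letter, so their combined contribution is at most $2s\,|Q_2|$, and by Theorem~\ref{theo:upper} we have $|Q_2|\in O(\tight(n+1))$, giving $O(s\,\tight(n+1))$ again.

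The only family that can have large outdegree is $\gamma_2$, which leaves $Q_1$: for each $S\in Q_1$ and each $\sigma$, $\gamma_2(S,\sigma)$ ranges over tuples $(S',\emptyset,g,0)$ with $S'=\delta(S,\sigma)$ fixed, $i=0$ fixed, $O=\emptyset$ fixed, and $g\in\mathcal M_{S'}$. So the size of $\gamma_2(S,\sigma)$ is at most $|\mathcal M_{S'}|$, and summing over $S$ and $\sigma$ yields a bound of $s\cdot 2^n\cdot \max_{S'}|\mathcal M_{S'}|$. The key step is therefore to bound $|\mathcal M_{S'}|$: a level ranking maximal with respect to a set $S'$ of size $m\le n$ is determined by choosing, for some odd rank $r$, which single state of $S'$ is sent to each odd value $1,3,\ldots,r-2$ (the final states going to $r-1$ and everything else to $r$), which is an injective assignment of at most $m$ states to at most $m/2$ slots; hence $|\mathcal M_{S'}|\le m!\le n!$. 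Since Yan's analysis (and the discussion around Proposition~\ref{prop:lower}) gives $n!\in O(\tight(n))\subseteq O(\tight(n+1))$ — indeed $\tight(n)$ grows like $(\kappa n)^n$ with $\kappa\approx 0.76>1/e$, so $n!$ is a lower-order term — the $\gamma_2$ contribution is $O(s\cdot 2^n\cdot n!)$, which is comfortably within $O(s\,\tight(n+1))$.

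Putting the four bounds together, the total size of $\mathcal D$ is $O(s\,2^n) + O(s\,\tight(n+1)) + O(s\,\tight(n+1)) + O(s\,2^n n!) = O\big(s\,\tight(n+1)\big)$, which is the claim; I would also note in passing that $|Q'|=|Q_1|+|Q_2|\in O(\tight(n+1))$ by Theorem~\ref{theo:upper}, so the additive $\sum_{q,\sigma}1$ term is itself $O(s\,\tight(n+1))$ and does not change the bound. The main obstacle is the estimate on $|\mathcal M_{S'}|$, i.e.\ convincing oneself that restricting $\gamma_2$ to maximal rankings genuinely caps the number of entry points to $Q_2$ at $n!$ rather than $\tight(n)$ per letter; once that combinatorial count is in hand, everything else is bookkeeping, relying only on the already-established $|Q_2|\in O(\tight(n+1))$ and on the singleton nature of $\delta_1,\gamma_3,\gamma_4$.
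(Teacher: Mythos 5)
Your decomposition is the same as the paper's: $\delta_1$, $\gamma_3$ and $\gamma_4$ map each state--letter pair to at most one successor, so they contribute $O(s\,2^n)$ plus $O\big(s\,|Q_2|\big)=O\big(s\,\tight(n+1)\big)$ by Theorem~\ref{theo:upper}, and the only delicate family is $\gamma_2$, whose outdegree is $|\mathcal M_{S'}|$. Your count of $\mathcal M_{S'}$ is essentially the paper's (which bounds it by $\sum_{i=1}^{m} m!/i!$ for $m=|S'\smallsetminus F|$); your ``$\leq m!$'' should strictly be $O(m!)$, since summing over the possible ranks $r$ (equivalently, over how many odd slots are filled) costs an extra factor smaller than $e$, but that is cosmetic and $O(n!)$ is all that is needed.

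The genuine gap is the final inference $O(s\,2^n\,n!)\subseteq O\big(s\,\tight(n+1)\big)$. You justify it by ``$n!\in O(\tight(n))$ because $\kappa\approx 0.76>1/e$'' and then let the factor $2^n$ ride along. That reasoning is a non sequitur: $\tight(n+1)$ exceeds $\tight(n)$ only by roughly a linear factor (the paper itself uses $O(\tight(n+1))=O(n\,\tight(n))$ in Theorem~\ref{theo:upper}), so an exponential factor $2^n$ cannot be absorbed by passing from $\tight(n)$ to $\tight(n+1)$, and ``$n!$ is a lower-order term'' says nothing about $2^n n!$. The same pattern would ``prove'' the bound if the entry outdegree were $4^n n!$, which is false, since $4^n n!\approx\big((4/e)n\big)^n$ and $4/e>\kappa$. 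What is actually needed---and what the paper states explicitly---is the sharper comparison $2^n n!\approx(2n/e)^n\approx(0.74\,n)^n$ versus $\tight(n)\approx(0.76\,n)^n$, i.e.\ $2/e\approx 0.736<\kappa\approx 0.76$. The margin is a few percent, not ``comfortable'', and it is precisely the reason the construction restricts $\gamma_2$ to maximal rankings; this quantitative comparison is the real content of the step and must appear in the proof. With it inserted, the rest of your bookkeeping (including the additive $\sum_{q,\sigma}1$ term via $|Q'|\in O(\tight(n+1))$) is correct and matches the paper's argument.
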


\begin{proof}
For all $S \in Q_1$ and $\sigma \in \Sigma$, we have that $\gamma_2(S,\sigma) = \{S'\} \times \{\emptyset\} \times \mathcal M_{S'} \times \{0\}$ for $S' = \delta(S)$.
Thus,  $|\gamma_2(S,\sigma)| = |\mathcal M_{S'}|$, which can be estimated by $\sum_{i=1}^{m} \frac{m!}{i!}$ for $m=|S' \smallsetminus F|$, which is in $O(n!)$.
Thus
$\sum\limits_{S \in Q_1,\ \sigma \in \Sigma}|\gamma_2(S,\sigma)| \in O(s\, 2^n \, n!) \subsetneq o\big(s\,\tight(n)\big)$
holds true.

($2^n \, n! \approx \big(\frac{2n}{e}\big)^n \approx (0.74\,n)^n$, whereas $\tight(n) \approx (0.76\, n)^n$.)
The claim thus follows with Theorem~\ref{theo:upper}, and $|\delta_1(q_1,\sigma)|=1$ and $|\gamma_3(q_2,\sigma)|,|\gamma_4(q_2,\sigma)|\leq 1$ for all $q_1 \in Q_1$, $q_2 \in Q_2$ and $\sigma \in \Sigma$.
\end{proof}

Together with Proposition~\ref{prop:lower}, this establishes tight complexity bounds for \buchi\ complementation:

\begin{corollary}
The complexity of complementing nondeterministic \buchi\ automata with $n$~states is in $\Omega\big(\tight(n-1)\big)$ and $O\big(\tight(n+1)\big)$.
The discussed complementation technique is therefore optimal modulo a small polynomial factor in $O(n^2)$.
\qed
\end{corollary}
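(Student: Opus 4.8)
The plan is to obtain the corollary by combining the two bounds already in place and then quantifying the residual gap through the known growth rate of $\tight$. For the lower bound I would invoke Proposition~\ref{prop:lower} directly: a full \buchi\ automaton with $n$ states is in particular a nondeterministic \buchi\ automaton with $n$ states, so on this family every complementation procedure is forced to output automata with $\Omega\big(\tight(n-1)\big)$ states, hence also of size $\Omega\big(\tight(n-1)\big)$. This pins down the $\Omega\big(\tight(n-1)\big)$ side for both natural complexity measures (the state count and the transition-function size).

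For the matching upper bound I would use the construction of Subsection~\ref{subs:ConD}. Proposition~\ref{prop:trans} guarantees that $\mathcal D$ is a genuine complement automaton ($\mathcal{L(D)}=\overline{\mathcal{L(A)}}$); Theorem~\ref{theo:upper} bounds its state space by $O\big(\tight(n+1)\big)$, and Theorem~\ref{theo:upperSize} bounds its size by $O\big(s\,\tight(n+1)\big)$, which is linear in the alphabet size $s$ and therefore linear in the size of $\mathcal A$. Thus both complexity measures of the complement automaton lie in $O\big(\tight(n+1)\big)$, measured in the size of the input.

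It remains to compare $\tight(n+1)$ with $\tight(n-1)$. Here I would use the estimate $\tight(n)\approx(\kappa n)^n$ recalled after Proposition~\ref{prop:lower} -- equivalently, Temme's approximation~\cite{Temme/93/Stirling} of the relevant Stirling numbers of the second kind -- from which a short computation gives $\tight(n)/\tight(n-1)\approx \kappa n\,\big(1+\tfrac1{n-1}\big)^{n-1}\to \kappa e\, n$, i.e.\ $\tight(n)=O\big(n\,\tight(n-1)\big)$; iterating once yields $\tight(n+1)=O\big(n^2\,\tight(n-1)\big)$. Hence the upper and lower bounds coincide up to a factor in $O(n^2)$, which is precisely the asserted optimality modulo a small polynomial factor.

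The only real care is needed in that last step: $\tight(n)\approx(\kappa n)^n$ is an asymptotic statement about $\sqrt[n]{\tight(n)}/n$ and must be upgraded to an honest $O(n)$ bound on the ratio $\tight(n)/\tight(n-1)$ valid for all $n$. The cleanest route is to return to the exact expression for $\tight(n)$ as a weighted sum of Stirling numbers of the second kind and estimate the growth of its dominant summand under $n\mapsto n+1$; alternatively, one can simply reuse -- applied twice -- the coarse bound $\tight(n+1)=O\big(n\cdot\tight(n)\big)$ that already appears in the proof of Theorem~\ref{theo:upper}. Everything else is bookkeeping over results established earlier in the paper.
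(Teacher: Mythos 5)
Your proposal is correct and matches the paper's (implicit) argument: the corollary is stated with no separate proof, being exactly the combination of Proposition~\ref{prop:lower} for the lower bound with Proposition~\ref{prop:trans}, Theorem~\ref{theo:upper} and Theorem~\ref{theo:upperSize} for the upper bound. Your closing remark on upgrading the asymptotic $\tight(n)\approx(\kappa n)^n$ to a ratio bound via the relation $\tight(n+1)=O\big(n\cdot\tight(n)\big)$ used in the proof of Theorem~\ref{theo:upper} is precisely the bookkeeping the paper leaves implicit for the $O(n^2)$ gap.
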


\section{Discussion}

This paper marks the end of the long quest for the precise complexity of the \buchi\ complementation problem.
It shows that the previously known lower bound is sharp, which is on one hand surprising, because finding tight lower bounds is generally considered the harder problem, and seems on the other hand natural, because Yan's lower bound builds on the concept of tight level rankings alone~\cite{Yan/08/lowerComplexity}, while the previously known upper bound~\cite{FKV/06/tighter} incorporates an additional subset construction and builds on estimations on top of this, leaving the estimations of the lower bound the simpler concept of the two.

Similar to the complexity gap in \buchi\ complementation twenty years ago, the complexity of \buchi\ determinization is known to be in $n^{\theta(n)}$, but there is also an $n^{\theta(n)}$ gap between the upper~\cite{Schewe/09/determinise} and lower~\cite{Yan/08/lowerComplexity} bound.
Tightening the bounds for \buchi\ determinization appears to be the natural next step after the introduction of an optimal \buchi\ complementation algorithm.

\newcommand{\etalchar}[1]{$^{#1}$}

\end{document}